\newcommand{\R}{\mathbb{R}}
\newcommand{\N}{\mathbb{N}}
\newcommand{\convtwod}[2]{\text{conv2d}(#1,#2)}
\title[SVD-DIP: Overcoming the Overfitting Problem in DIP-based CT Reconstruction]{SVD-DIP: Overcoming the Overfitting Problem in DIP-based CT Reconstruction}
\begin{document}

\maketitle
\thispagestyle{plain}

\begin{abstract}
The deep image prior (DIP) is a well-established unsupervised deep learning method for image reconstruction; yet it is far from being \emph{flawless}. The \emph{DIP overfits to noise} if not early stopped, or optimized via a regularized objective. 
We build on the regularized fine-tuning of a pretrained DIP, by adopting a novel strategy that restricts the learning to the adaptation of \emph{singular values}. The proposed SVD-DIP uses \emph{ad hoc} convolutional layers whose pretrained parameters are decomposed via the singular value decomposition. Optimizing the DIP then solely consists in the fine-tuning of the singular values, while keeping the left and right singular vectors fixed.
We thoroughly validate the proposed method on real-measured \textmu CT data of a lotus root as well as two medical datasets (LoDoPaB and Mayo). We report significantly improved stability of the DIP optimization, by \emph{overcoming the overfitting} to noise.
\end{abstract}

\begin{keywords}
Deep Image Prior, Fine-Tuning, Computed Tomography, Singular Value Decomposition
\end{keywords}

\section{Introduction}
In medical imaging, we are often interested in inverse problems of the form $y=Ax+\nu$, with $y\in \mathbb{R}^m$ being a noisy measurement, $x\in \mathbb{R}^n$ the unknown image of interest, $A$ a linear forward operator, and $\nu \sim \mathcal{N}(0,\sigma^2I)$ i.i.d.\ noise. This problem is often ill-posed, and regularization is needed to recover a sensible image \cite{EnglHankeNeubauer:1996,ItoJin:2015}.

In recent years, deep learning has been applied successfully to many imaging modalities, often via a supervised learning paradigm \cite{arridge_maass_oektem_schoenlieb_2019,ongie2020deep}. However, supervised learning tends to require a large amount of paired training data to be effective \cite{Baguer_2020}.
Deep Image Prior (DIP) \cite{Ulyanov_2020} is an unsupervised alternative to applying deep learning to image reconstruction.
Its main advantage over supervised methods is that it requires no training data, and learns only on the observed data sample, relying on the rich structure of convolutional neural networks (CNNs) to have a regularizing effect on the image. 
However, it is not without limitations. The network can overfit to the noise, and has to be freshly trained for every image we intend to reconstruct. The Educated Deep Image Prior (EDIP)  \cite{educatedWarmStart} is a variant of DIP that addresses some of these issues. 
It uses pretraining by initializing the network architecture with a pretrained (warm-start) parameter setting, rather than randomly. 
However, EDIP still suffers from overfitting even if equipped with a regularized objective; iterating beyond a certain point leads to deteriorated quality.

The problem of overfitting is not unique to DIP. 
Recently, \citet{singularValueFinetuning} suggested a novel way of adapting pretrained parameters in a CNN using the singular value decomposition (SVD) to address overfitting and the generalization ability of a CNN specifically in the context of image segmentation. In this paper, our contribution is to integrate the idea of the SVD fine-tuning into the EDIP framework, achieving remarkable stability. 
Specifically, we first train a CNN on synthetic data, and before warm-starting the DIP with the pretrained parameter setting, we replace some or all of the layers of the network using the SVD. In our experiments, the resulting DIP (i.e., SVD-DIP) is much more resistant to overfitting. When iterating for very long, e.g.\ 200k iterations, it retains a high PSNR value, outperforming the classical DIP. On the other hand, the SVD-DIP only leads to a minor drop in the maximal PSNR value.

\section{The Deep Image Prior}

Given a measurement $y$, the DIP \cite{Ulyanov_2020} finds an $x$ that minimizes $||Ax-y||_2^2$. The method trains a CNN, typically deploying a U-Net architecture \cite{Unet}, to fit to the single data sample $(z,y)$, where $z$ is a randomly initialized input (e.g., with i.i.d.\ Gaussian noise). The DIP finds a parameter setting $\theta$ such that the output of the neural network $\varphi_\theta(z)=x$ minimizes the error $||A\varphi_\theta(z)-y||_2^2$. This requires training a CNN for each $y$; the optimization can take many hours, depending on the complexity (esp.\ high-dimensionality) of the reconstruction task \cite{educatedWarmStart}. The method relies on the observation that the CNN structure already captures a sufficient amount of low-level image statistics, such that it can reconstruct an image well even without being trained on any data save for the input image \cite{Dittmer:2020}. When data is scarce or expensive to acquire, this represents a major upside.
However, the Achilles' heel of the DIP is overfitting to noise. When optimizing for too long, the network fits the noise. There are several approaches to alleviate this issue. The use of an explicit regularization term \cite{https://doi.org/10.48550/arxiv.1810.12864} is one of them. For all our DIP variants, we replace the standard objective used in \citet{Ulyanov_2020} $||A\cdot-y||_2^2$ with
\begin{equation}
    ||A\cdot-y||_2^2+\gamma \text{TV}(\cdot)
    \label{eqn:objective}
\end{equation}
where $\gamma>0$ and $\text{TV}(x):=\sum_{i,j}|x_{i+1,j}-x_{i,j}|+\sum_{i,j}|x_{i,j+1}-x_{i,j}|$ is the anisotropic total variation. That is, the DIP baseline uses the regularized objective in \autoref{eqn:objective}. The latter only partially alleviates overfitting to the noise, often not being sufficient to prevent it completely. Other methods rely on the Stein's unbiased risk estimator \cite{Jo_2021_ICCV} (also see \appendixref{SURE}), or on hand-crafted early-stopping criteria  \cite{https://doi.org/10.48550/arxiv.1810.12864, https://doi.org/10.48550/arxiv.2112.06074}. Additionally, one may constrain the network to an under-parameterized regime such that its ability to fit to noise is reduced, e.g., deep decoder \cite{Heckel2019deep_decoder}. 

The Educated Deep Image Prior (EDIP) \cite{educatedWarmStart} is a variant of the DIP that makes use of pretraining, motivated by wanting to speed up the DIP, since freshly training DIP on multiple measurements is time-consuming. The idea is to first train the CNN architecture using available or synthetic data. The obtained parameter setting  $\theta^\star$ is used as the starting point for the subsequent DIP reconstructive task on $y$. This ``warm-start'' allows the network to reconstruct the desired image in fewer iterations, exploiting benign inductive biases learned in the pretraining phase.%

\section{Overcoming Overfitting via an SVD-based Pretrained CNN}
While the original EDIP model was motivated by a desire to speed up the DIP, we are instead interested in utilizing pretraining to address overfitting. Intuitively, we can think of the warm-start parameter setting obtained via pretraining as a list of tensors $W$, each coding the weights for a convolutional layer in the CNN.
In EDIP, this parameter setting is used as the initialization. Thus, the only mechanism in place to stop from eventually overfitting to noise is the TV term in the loss. Our approach, the SVD-DIP, integrates the idea from \cite{singularValueFinetuning} of adapting the pretrained parameter setting, into the DIP framework. Specifically,
the pretrained tensor $W\in \mathbb{R}^{C_{out}\times C_{in} \times K \times K}$ (which represents $K \times K$ convolutions) is first folded into a matrix $W'\in \mathbb{R}^{C_{out} \times C_{in}K^2}$. Then, the SVD yields three matrices, $U'\in \mathbb{R}^{C_{out}\times R}$, $S'\in\mathbb{R}^{R \times R}$ (diagonal), $V'\in \mathbb{R}^{R \times C_{in}K^2}\quad (R=\min\{C_{out},C_{in}K^2\})$,
\begin{equation*}
    W' = U' S' V'.
\end{equation*}
These matrices can be ``unfolded'' back into three tensors $U\in \mathbb{R}^{C_{out} \times R \times 1 \times 1}$, $S\in \mathbb{R}^{R \times R \times 1 \times 1}$, $V\in \mathbb{R}^{R \times C_{in} \times K \times K}$ such that composing the tensors' induced convolutions is equivalent to taking the original tensor's induced convolution. For the technical details, see \appendixref{SVD_proof}. 
The key idea of the proposed approach is to represent the network parameters in this form, then freeze the tensors $U$ and $V$, keeping their convolutions fixed, but allowing the tuning of the singular values (SVs), i.e.\ the diagonal of $S'$.

\begin{figure}[b]
     \centering
    {
    }%
    {
    }%
    {
    \begin{minipage}{.49\textwidth}
        \centering
        \includegraphics[width=\textwidth]{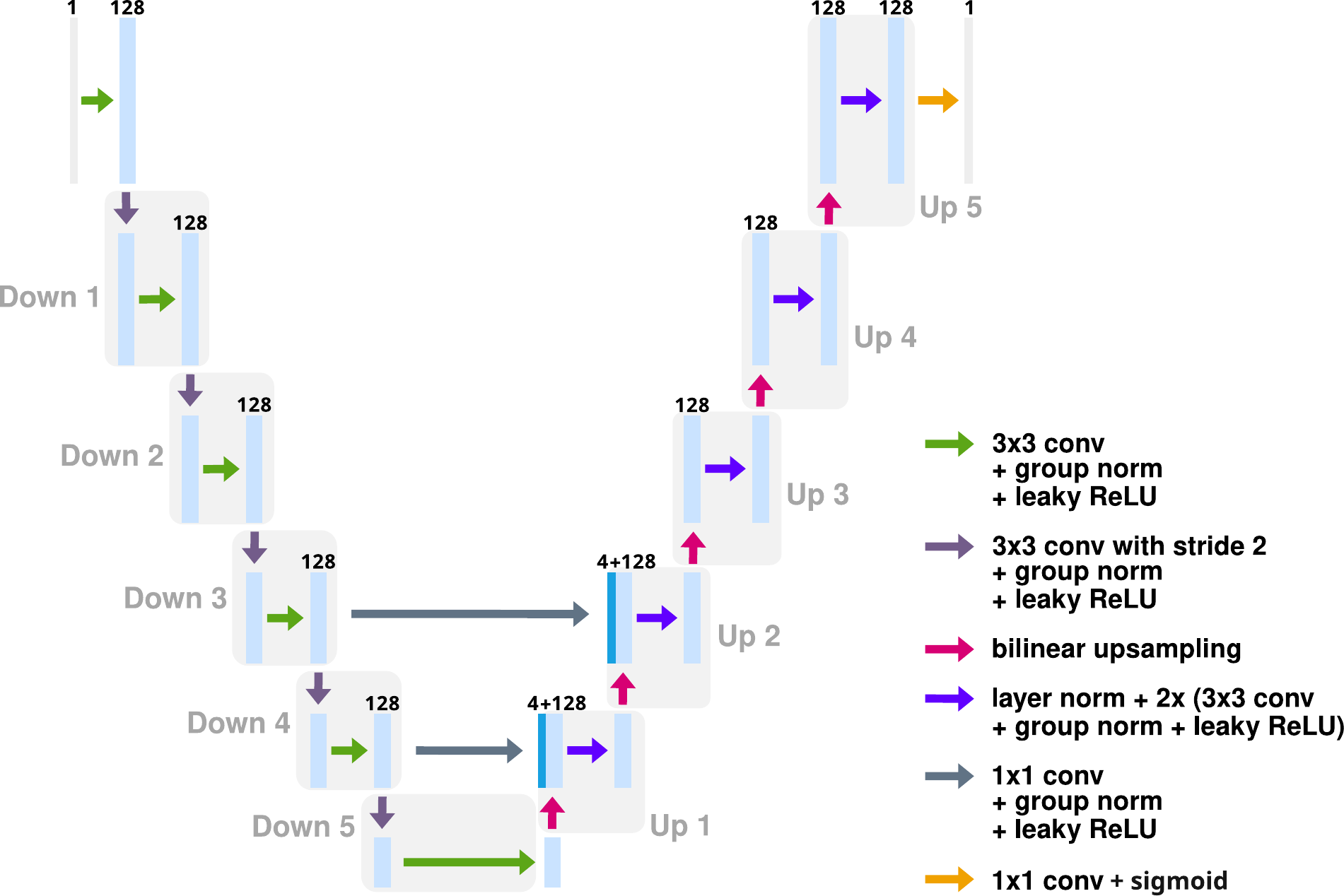}
        \vspace*{-2em}
        \caption{{U-Net architecture used for DIP, as well as for EDIP and SVD-DIP when pretraining on Ellipses.}}
        \label{fig:unet_ellipses}
    \end{minipage}\hfill%
\begin{minipage}{0.48\textwidth}
        \centering
        \includegraphics[width=\textwidth]{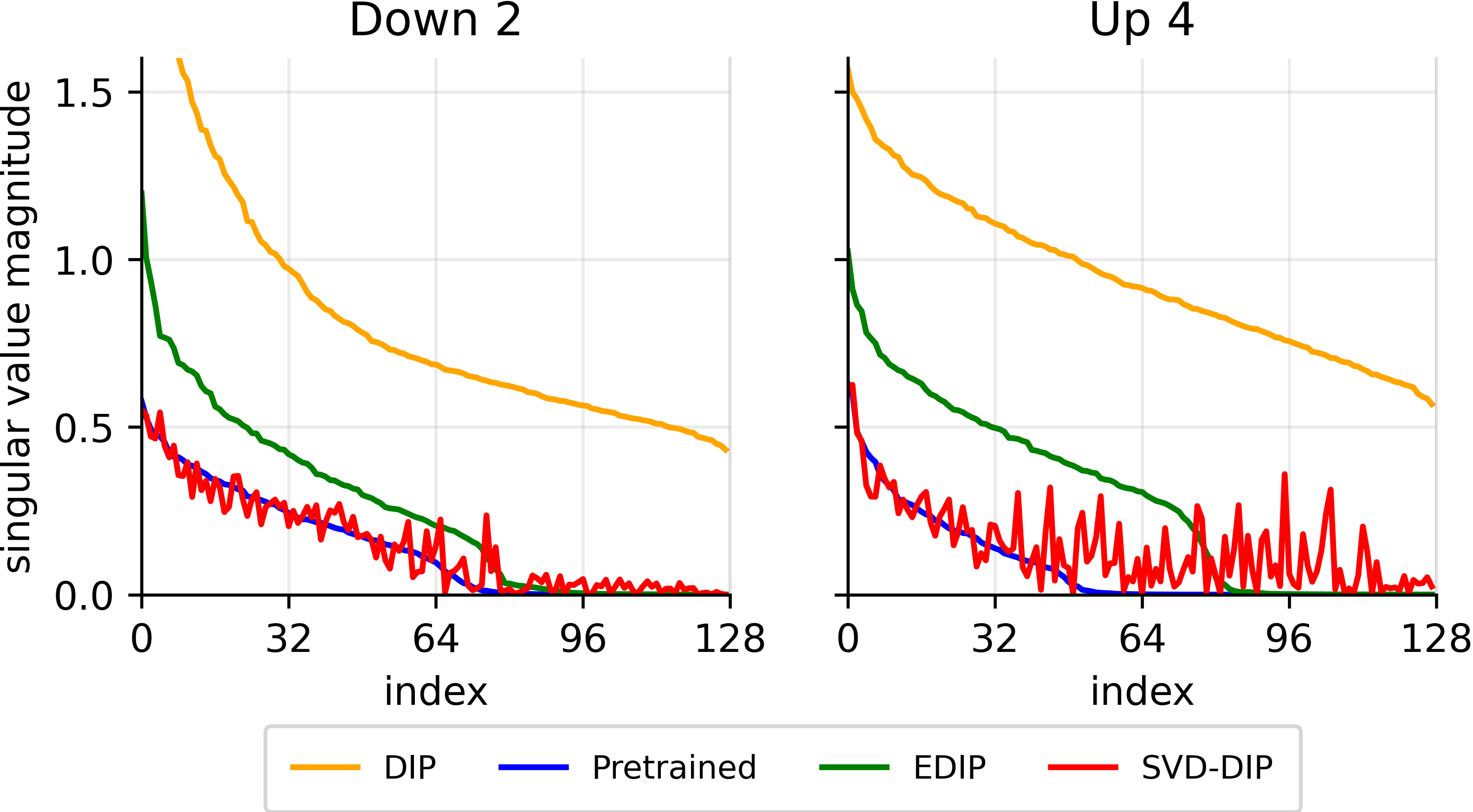}
        \caption{SVs taken from the  1st convolutions of Down layer 2 and Up layer 4 of the U-Net on the lotus.}
        \label{fig:singular_values}
    \end{minipage}
    \vspace{-2em}
    
    }
\end{figure}
Freezing $U,V$ and only varying the SVs retains the structure learned by pretraining the network, but the weighting of these structural components can be fine-tuned for a specific task at hand: increasing the weight of more important structures, or reducing less important ones. \autoref{fig:singular_values} shows that the SVs of the pretrained CNN are modified by SVD-DIP, but still retain the overall trend of the original values. 

Prior to the DIP fine-tuning, the network encodes the same function whether it is warm-started with the pretrained parameter setting, or its SVD-adapted counterpart, since
with or without the SVD replacement, each layer performs exactly the same convolutional operation. Once fine-tuning begins, the two methods diverge from each other with SVD-DIP being more constrained due to having fewer parameters.
The way the SVD is used to replace convolutional layers can be viewed as parameter compression; if $C_{in}=C_{out}$, the number of SVs is a fraction of the square root of the number of parameters of the original tensor.
This drastic reduction in the number of parameters, while still retaining enough expressive power thanks to the structure learned during pretraining being encoded in $U$ and $V$, greatly alleviates the overfitting issue and stabilizes the convergence, as demonstrated by the experiments below. When constraining the parameter space, we sacrifice some capability to represent finer structure, but gain substantial robustness to overfitting. We are more interested in a proof of concept for a method with high stability, which bypasses the need for reliable early stopping criteria, even when the pretraining and application datasets are different.

\section{Experiments and Results}

\subsection{The Experiment Setup}

We conduct our experiment on both \textmu CT and medical CT image datasets.  During pretraining, the U-Net learns to post-process filtered back-projections of noisy CT projections, that are simulated using the geometry of the target data. See \appendixref{experiment_setup_details} for more details.
We consider the following scenarios:

\paragraph{Pretraining on Ellipses $\to$ fine-tuning on Lotus and LoDoPaB's Chest}
Pretraining is performed on a dataset consisting of synthetically generated ellipses, which is commonly used for inverse problems arising in imaging \cite{Adler_2017} and used to warm-start DIP for CT reconstruction in \cite{educatedWarmStart}. 
Synthetic data is advantageous in applications where data is scarce. %
5\% Gaussian noise is added to the simulated projection data.
We use this approach for two target datasets: (i) real-measured \textmu CT data of a lotus root \cite{bubba2016lotus_paper} (fan-beam geometry with $20$ angles, $429$ detector pixels and image size $128\times 128\,\mathrm{px}^2$);
(ii) simulated medical chest CT data from LoDoPaB \cite{Leuschner2021lodopab} (parallel-beam geometry with $200$ angles, $513$ detector pixels and image size $362\times362\,\mathrm{px}^2$). It is simulated to include Poisson noise corresponding to 4096 photons per pixel before attenuation. As per best practice, the ground truth images with low artifact corruption were preferred when selecting $10$ LoDoPaB test samples via manual inspection before running any experiments.
Note that there is a shift in noise distribution between pretraining that uses $5\%$ Gaussian noise and the target data which is either real-measured or simulated to contain Poisson noise.
The U-Net architecture is shown in \autoref{fig:unet_ellipses}, based on {\cite[Table B5, DIP for LoDoPaB]{Baguer_2020}.}
The SVD replacements reduce the number of parameters from $128^2\cdot 3^2$ to 128 in each convolution layer, i.e., reducing by a factor of $\frac{1}{1152}$.
We refer to the two transfer settings as ``Ellipses-Lotus'' and ``Ellipses-LoDoPaB''.
\paragraph{Pretraining on Chest $\to$ fine-tuning on Mayo's Head, Abdomen and Chest}
Pretraining is performed on LoDoPaB \cite{Leuschner2021lodopab}.
The dataset contains ca.\ 40k chest CT images of size $362\times362\,\mathrm{px}^2$ and simulated parallel-beam projections with 1000 angles, 513 detector pixels and Poisson noise corresponding to 4096 photons per pixel before attenuation.
We experiment with both the original 1000-angle geometry as well as a sub-sampled 200-angle geometry.
We use images of different body parts from \cite{Moen2021mayo_ldct_and_projection} as the images to be reconstructed and simulate observations using the same geometry and noise setting that was used in LoDoPaB.
{ For EDIP and SVD-DIP, we use the U-Net architecture shown in \appendixref{apd:unet_arch_choice} \autoref{fig:unet}(b) trained in \cite{Baguer_2020}, while for DIP we use the architecture shown in \autoref{fig:unet_ellipses}, which is optimal in non-pretrained settings (see \appendixref{apd:unet_arch_choice}).
}
We refer to this setting as ``LoDoPaB-Mayo''.

\medskip
We use the resulting parameters as our pretrained parameters. Both EDIP and SVD-DIP receive the filtered back-projection obtained from test data, $z=\text{FBP}(y)$, matching the post-processing task of the pretraining, while DIP receives an i.i.d. Gaussian noise image $z$. For the target data containing Poisson noise (Ellipses-LoDoPaB and LoDoPaB-Mayo), we replace the squared error $\Vert A\cdot -y\Vert_2^2$ in the objective \eqref{eqn:objective} with the Poisson regression loss matching the noise distribution. A suitable regularization parameter $\gamma$ is chosen for each setting, and the same value is used for all methods. See \appendixref{apd:experiment_details} for more details. We repeat three times for each image from the data sets, for each variant of DIP. We test the classical (randomly initialized) DIP, EDIP and SVD-DIP.

\subsection{Comparison of Methods}
\subsubsection{\texorpdfstring{\textmu}{μ}CT of lotus root (Ellipses-Lotus)}
    \begin{figure}[!b]
    \floatconts
      {fig:psnrtrace_lotus}
      {\caption{Optimization of DIP, EDIP and SVD-DIP on the lotus, pretraining on ellipses. Each line represents the mean over 3 runs, the colored area the standard deviation. The final SVD-DIP reconstruction is shown on the right, along with the max PSNR reconstructions of DIP and EDIP, which would require ideal early stopping.}}
      {\begin{minipage}{0.52\textwidth}\includegraphics[width=0.75\textwidth]{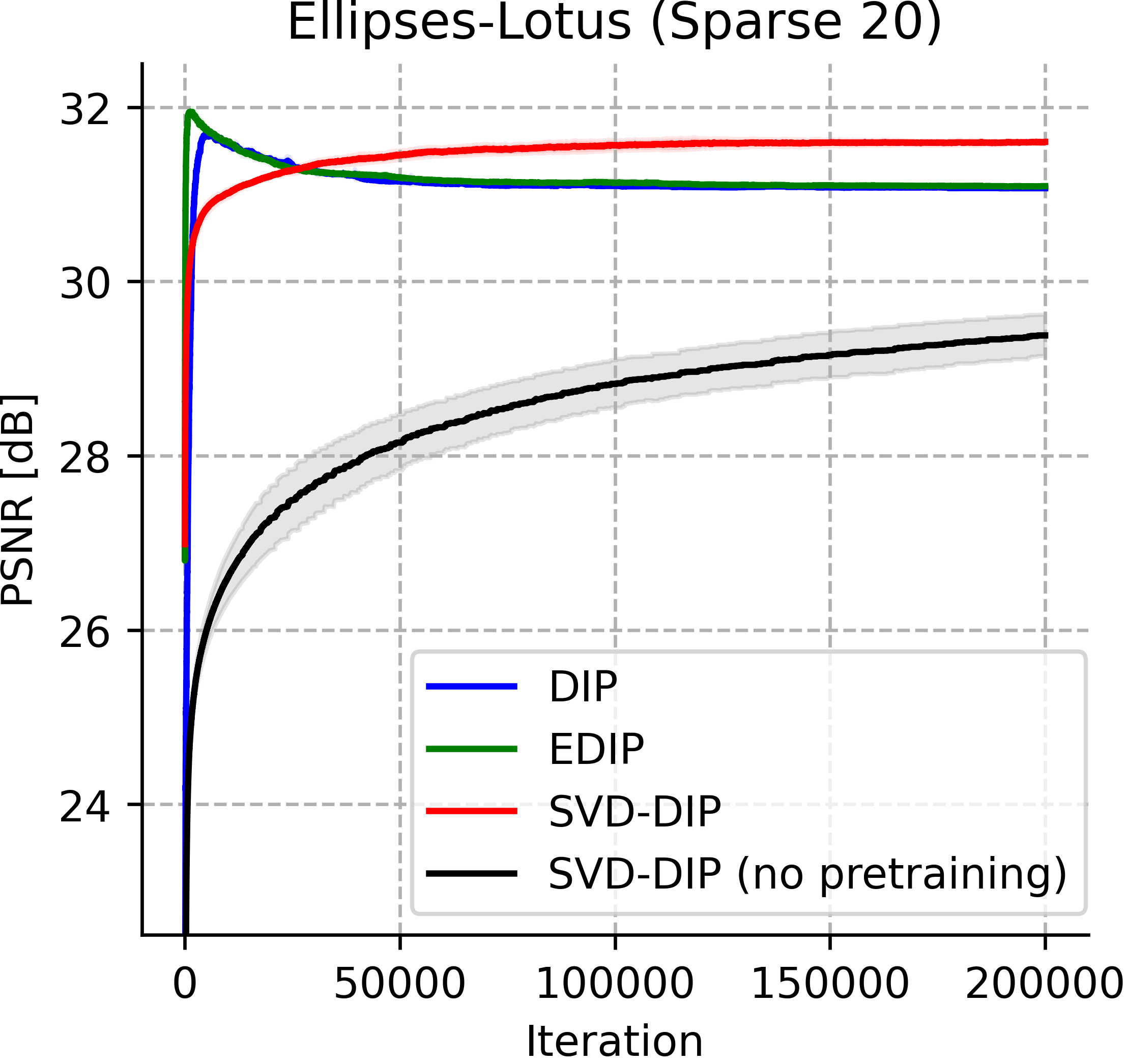}\end{minipage}\hfill%
      \begin{minipage}{0.45\textwidth}\includegraphics[width=0.75\textwidth]{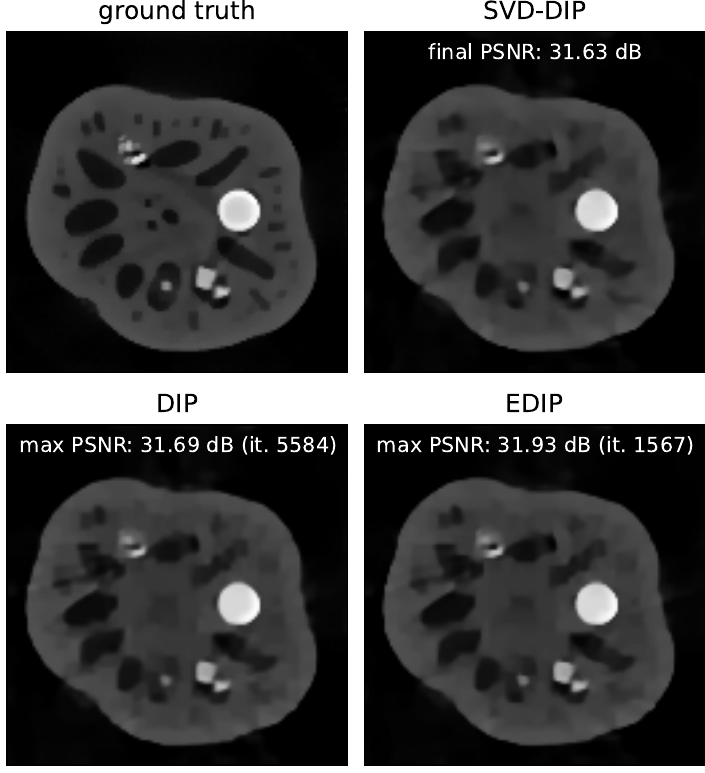}\end{minipage}
      \vspace{-1.5mm}
      }\vspace{-1.5mm}
    \end{figure}

\autoref{fig:psnrtrace_lotus} shows the convergence behavior of the different DIP frameworks for the lotus.
{EDIP peaks early, reaching a maximum PSNR (max PSNR) value of 31.95 dB. However, after the peak, its PSNR steadily decreases, indicating overfitting.
The vanilla DIP and the (pretrained) SVD-DIP are on par in terms of the max PSNR (31.69 dB and 31.60 dB, resp.), but the vanilla DIP is far less stable, i.e.\ its PSNR falls after the peak value, while the PSNR of SVD-DIP does not decline.
While EDIP shows a similar overfitting behavior as DIP, EDIP reports a higher max PSNR. If we consider both max PSNR and overfitting, pretraining generally improves the DIP, with EDIP achieving the best max PSNR and SVD-DIP avoiding overfitting.
Thus, the pretrained SVD-DIP performs almost as well as EDIP in terms of max PSNR and clearly outperforms EDIP in terms of stability.}
EDIP has the benefit of the speed at which it reaches its best PSNR value.

As an ablation study, we also include the results for SVD-DIP without pretraining, which performs much worse than the rest. This is not surprising, since it can only learn a weighting of the singular vectors that posses no useful structure, due to the random initialization. 

\paragraph{SVD Truncation}\label{truncation}
Since before and after the fine-tuning, {a large portion of} SVs close to zero generally remain so, we observe that the SVD-DIP can be further reduced in the number of parameters by the truncated SVD. \autoref{fig:psnrtrace_lotus_svd} contains a comparison of SVD-DIP settings with different approaches to truncation. For (50\% rd.), we truncated all but the top 50\% of the SVs. For (10\% thresh.), we truncated all SVs which are below $10\%$ of the largest SV.
The graph shows that the different variations of SVD-DIP perform very similarly, the difference between their PSNRs being {within the order of magnitude of the standard deviation observed for repeated runs of the same experiments}. 
Thus, the lotus root image is simple enough to allow for a low-dimensional representation of the parameter space by only the upper 50\% of the SVs (50\% rd.), or only the values greater or equal to 10\% of the largest (10\% thresh.).
{We view these results as a proof-of-concept for SVD truncation, whose advantageous utilization would be subject of future work.}

\begin{figure}[htbp]
    \floatconts
      {fig:psnrtrace_lotus_svd}
      {\vspace{-1em}\caption{Optimization of SVD-DIP for forms of truncation on the lotus, pretraining on ellipses. Each line represents the mean over 3 runs, the colored area the SD.}}
      {\includegraphics[width=0.5\linewidth]{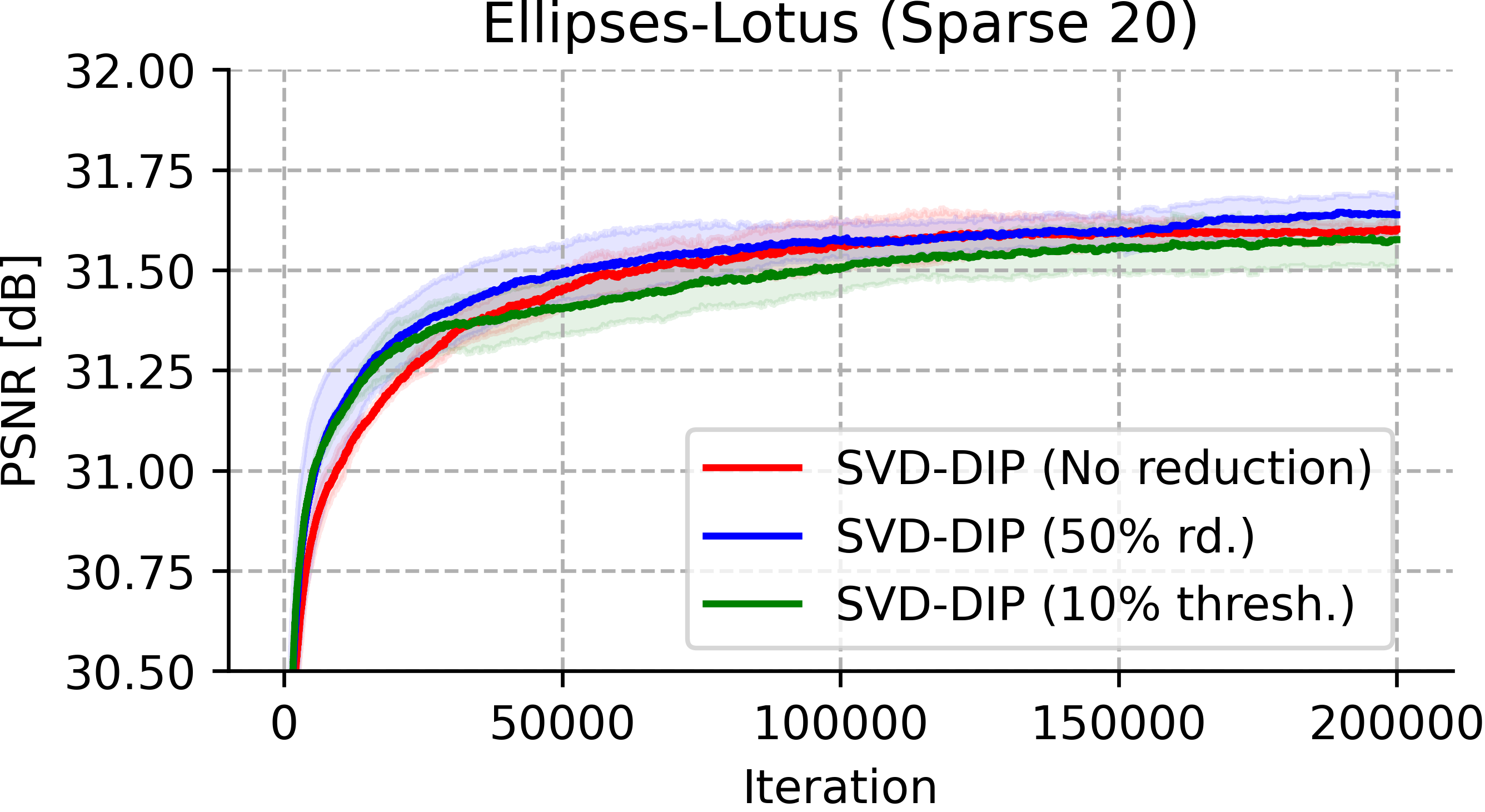}
      \vspace{-1em}
      }
      \vspace{-0.5em}
    \end{figure}

\subsubsection{Chest CT (Ellipses-LoDoPaB)}

\begin{table}[hp]
\floatconts
  {tab:lodo_short}%
  {\caption{Mean PSNR values for Ellipses-LoDoPaB (Sparse 200) over 3 runs and 10 images, 50000 Iterations each. Max denotes the  maximum PSNR,  Final denotes the PSNR attained after 50000 iterations. Values are in dB. \autoref{tab:lodo} in the Appendix contains the data for all runs on which we compute the mean.}}%
{\begin{tabular}{rr|r||l||r|r||r|r}
               & \multicolumn{2}{c}{DIP}           & \multicolumn{1}{c}{Pretraining} & \multicolumn{2}{c}{EDIP}          & \multicolumn{2}{c}{SVD-DIP}       \\ \hline
& Final & Max   & Init        & Final & Max   & Final   & Max     \\ \hline
Mean           & 32.08 & 35.34 & 30.51       & 32.39 & 35.09 & 34.65   & 34.76   \\ \hline
\end{tabular}
}
\end{table}

    \begin{figure}[htb]
    \floatconts
      {fig:psnrtrace_lodo}
      {\caption{Optimization of DIP, EDIP and SVD-DIP on LoDoPaB (Sparse 200) test samples 13, 17 and 43, pretraining on ellipses. %
      Each line represents the mean over 3 runs.
      }}
      {\includegraphics[width=1\linewidth]{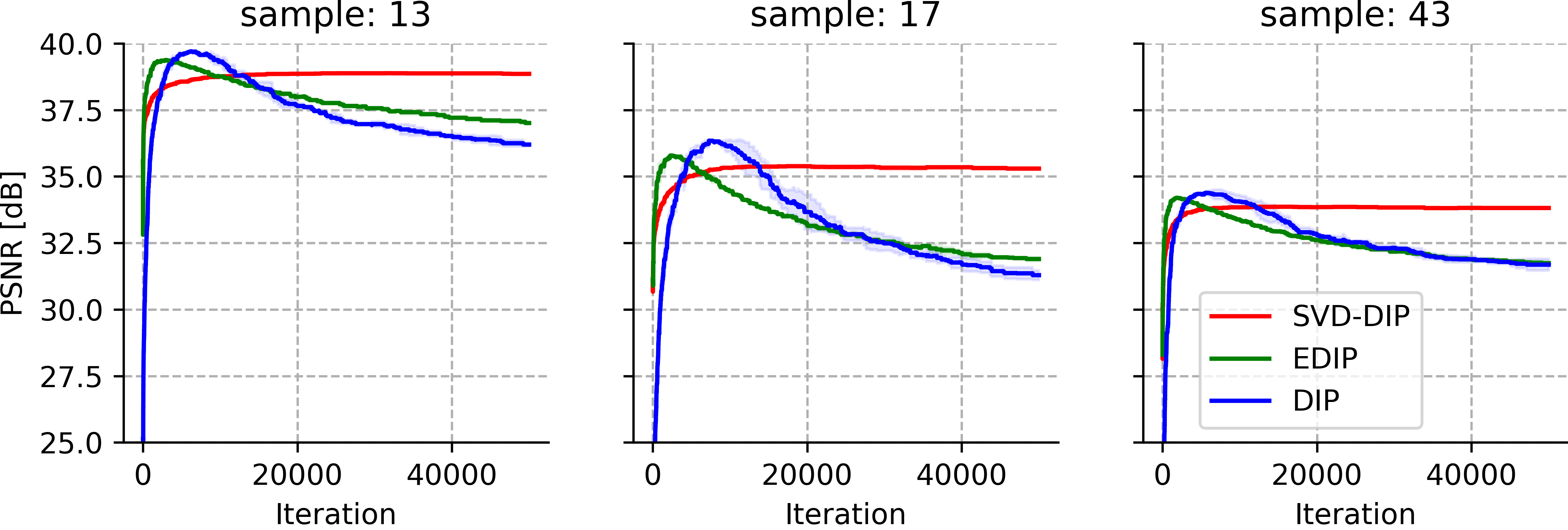}
      \vspace{-3em}}
    \vspace{-1em}
    \end{figure}

DIP has a better max PSNR on average (see \autoref{tab:lodo_short}). SVD-DIP reaches on average a lower max PSNR than EDIP and DIP, but it is much more stable, with the PSNR trending upwards, rather than declining. \autoref{fig:psnrtrace_lodo} exemplarily shows the behavior of the PSNR observed in the runs.
On average, SVD-DIP outperforms DIP and EDIP, arriving at a slightly lower value in a remarkably stable manner. Thus, the strong regularization induced by fixing the singular vectors comes at the cost of a slightly lower max PSNR, but on the other hand greatly reduces the overfitting. Here, for SVD-DIP we replace the convolutional layers in all down- and up-blocks by SVD representations, except the first three down-blocks.
We can treat the number of down-blocks that remain unchanged as a hyperparameter, which can be adjusted depending on the need to adapt to different input data. We replace fewer blocks as we have a major change in the noise distribution, so as to provide the network with greater flexibility.

\subsubsection{CT of different body parts (LoDoPaB-Mayo)}
    \begin{figure}[h!tbp]
    \floatconts
      {fig:scatter_multiplot}
      {
      \caption{The mean and SD of Final and Maximum PSNR for DIP, EDIP and SVD-DIP on different body parts from LoDoPaB-Mayo, pretraining on LoDoPaB. Both 1000 and 200 angle settings are considered, indicated on the horizontal axis. For each setting, 10 samples with 3 runs each are considered. Final reconstructions of SVD-DIP are shown for the 1000 angle setting, along with the ground truth.}
      }{
    \begin{minipage}{0.45\textwidth}
    \includegraphics[width=\textwidth]{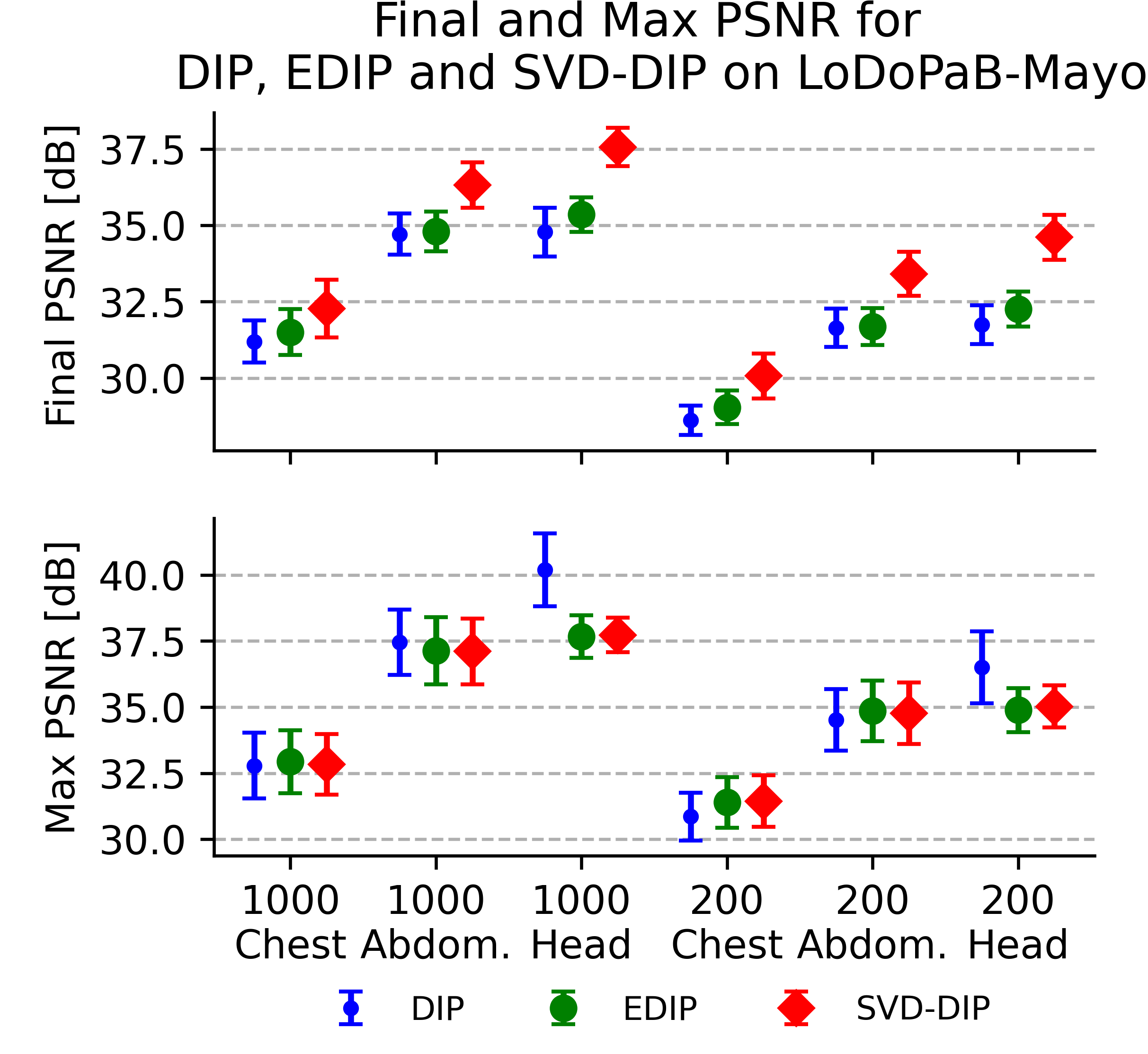}
    \end{minipage}\hfill
    \begin{minipage}{0.54\textwidth}
    \includegraphics[width=\textwidth]{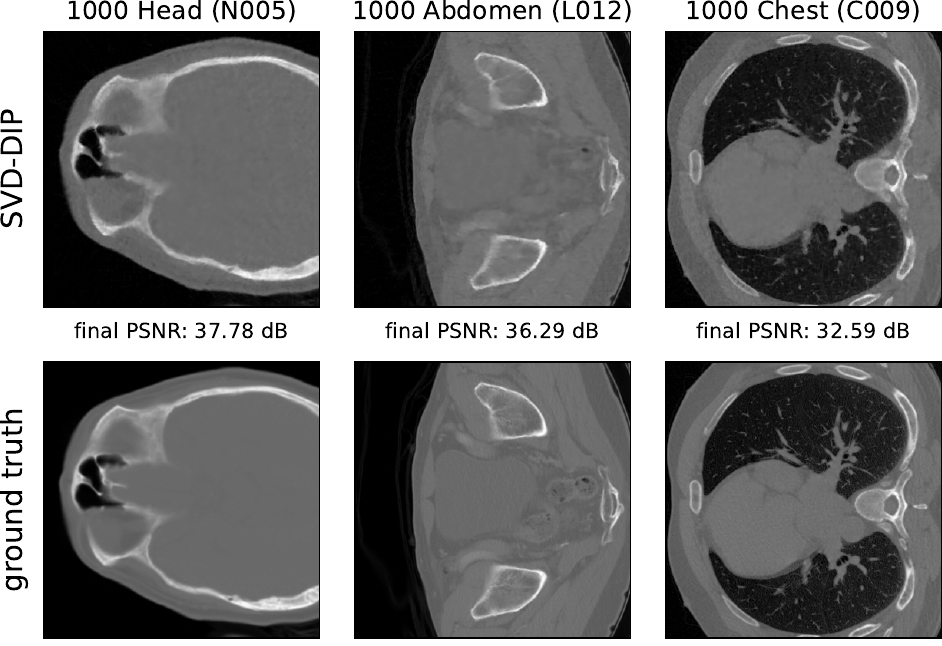}
    \end{minipage}
    \vspace{-1.5em}
      }\vspace{-0.5em}
    \end{figure}

    \begin{figure}[t]
    \floatconts
      {fig:psnrtrace_lodo_mayo_1000}
      {\caption{The mean and SD over 3 runs of the optimization of DIP, EDIP and SVD-DIP on samples from different body parts from LoDoPaB-Mayo, pretraining on LoDoPaB. The 1000 angle setting is used. %
     }}
      {\includegraphics[width=1\linewidth]{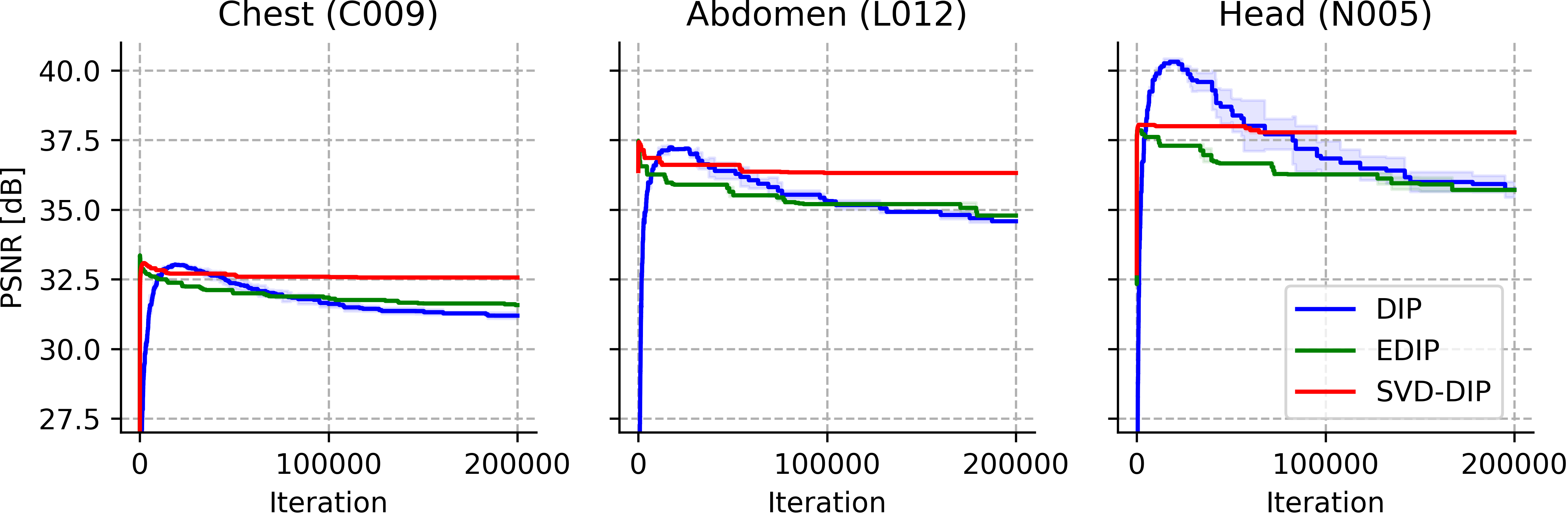}
      \vspace{-3em}}\vspace{-1.5em}
    \end{figure}

{
As shown in \autoref{fig:scatter_multiplot}, SVD-DIP on average reports higher final PSNR values when compared to EDIP and DIP, as well as max PSNR values similar to those of EDIP.
On Chest and Abdomen, max PSNR values are comparable across all methods; on Head instead, the DIP achieves a higher max PSNR than EDIP and SVD-DIP. Recall that the architecture shown in \autoref{fig:unet_ellipses} is used for DIP, while for EDIP and SVD-DIP we employ the readily trained networks from \cite{Baguer_2020} with the architecture shown in \appendixref{apd:unet_arch_choice} \autoref{fig:unet}(b).
The latter, designed for a pure post-processing task, differs by being smaller, by including skip connections, and by \emph{missing} a final sigmoid activation.
The Head images have rather simplistic structures and can be fit particularly well by the architecture used for DIP (\autoref{fig:unet_ellipses}), with the final sigmoid activation being an important inductive bias.
In addition, the Head data is outside the training data distribution (LoDoPaB), while Chest and Abdomen are in or close to the training distribution.
See \appendixref{apd:unet_arch_choice} for additional results with the respective other architecture choice.
In all cases, DIP and EDIP show significant overfitting, which is effectively reduced with SVD-DIP.
}
\FloatBarrier

\section{Conclusion}

In this work we built on the regularized fine-tuning of a pretrained DIP and adopted a novel strategy that restricts the learning to the adaptation of singular values of the unfolded network parameter tensor. We proposed a variant of the DIP, named SVD-DIP, that overall overcomes the need for early stopping, but sacrifices some speed relative to EDIP. This approach yields a more stable (less prone to overfit to noise) DIP optimization. {The empirical results suggest that while the SVD-DIP reconstructive properties are usually on par with or slightly worse than those of DIP or EDIP, it loses very little in terms of PSNR, even after iterating for a long time.}

\midlacknowledgments{R.B. was supported by the i4health PhD studentship (UK EPSRC EP/S021930/1). J.L. was funded by the German Research Foundation (DFG; GRK 2224/1). The work of B.J. was partially supported by UK EPSRC grants EP/T000864/1 and EP/V026259/1. P.M. acknowledges support by DFG-NSFC project M-0187 of the Sino-German Center mobility programme.}

\bibliography{midl23_14}

\begin{thebibliography}{26}
\providecommand{\natexlab}[1]{#1}
\providecommand{\url}[1]{\texttt{#1}}
\expandafter\ifx\csname urlstyle\endcsname\relax
  \providecommand{\doi}[1]{doi: #1}\else
  \providecommand{\doi}{doi: \begingroup \urlstyle{rm}\Url}\fi

\bibitem[Abu-Hussein et~al.(2021)Abu-Hussein, Tirer, Chun, Eldar, and
  Giryes]{img_rest_GSURE}
Shady Abu-Hussein, Tom Tirer, Se~Young Chun, Yonina~C. Eldar, and Raja Giryes.
\newblock Image restoration by deep projected {GSURE}, 2021.
\newblock URL \url{https://arxiv.org/abs/2102.02485}.

\bibitem[Adler et~al.(2018)Adler, Kohr, Ringh, Moosmann, Banert, Ehrhardt, Lee,
  Niinimäki, Gris, Verdier, Karlsson, Zickert, Palenstijn, Öktem, Chen,
  Loarca, and Lohmann]{adler2018odl}
Jonas Adler, Holger Kohr, Axel Ringh, Julian Moosmann, Sebastian Banert,
  Matthias~J. Ehrhardt, Gregory~R. Lee, Kati Niinimäki, Barbara Gris, Olivier
  Verdier, Johan Karlsson, Gustav Zickert, Willem~Jan Palenstijn, Ozan Öktem,
  Chong Chen, Hector~Andrade Loarca, and Michael Lohmann.
\newblock {Operator Discretization Library (ODL)}, 2018.
\newblock \textit{Zenodo}.

\bibitem[Adler and Öktem(2017)]{Adler_2017}
Jonas Adler and Ozan Öktem.
\newblock Solving ill-posed inverse problems using iterative deep neural
  networks.
\newblock \emph{Inverse Problems}, 33\penalty0 (12):\penalty0 124007, nov 2017.
\newblock \doi{10.1088/1361-6420/aa9581}.
\newblock URL \url{https://doi.org/10.1088\%2F1361-6420\%2Faa9581}.

\bibitem[Aggarwal et~al.(2022)Aggarwal, Pramanik, John, and Jacob]{Ensure}
Hemant~Kumar Aggarwal, Aniket Pramanik, Maneesh John, and Mathews Jacob.
\newblock {ENSURE}: A general approach for unsupervised training of deep image
  reconstruction algorithms.
\newblock \emph{{IEEE} Transactions on Medical Imaging}, pages 1--1, 2022.
\newblock \doi{10.1109/tmi.2022.3224359}.
\newblock URL \url{https://doi.org/10.1109%2Ftmi.2022.3224359}.

\bibitem[Arridge et~al.(2019)Arridge, Maass, Öktem, and
  Schönlieb]{arridge_maass_oektem_schoenlieb_2019}
Simon Arridge, Peter Maass, Ozan Öktem, and Carola-Bibiane Schönlieb.
\newblock Solving inverse problems using data-driven models.
\newblock \emph{Acta Numerica}, 28:\penalty0 1–174, 2019.
\newblock \doi{10.1017/S0962492919000059}.

\bibitem[Baguer et~al.(2020)Baguer, Leuschner, and Schmidt]{Baguer_2020}
Daniel~Otero Baguer, Johannes Leuschner, and Maximilian Schmidt.
\newblock Computed tomography reconstruction using deep image prior and learned
  reconstruction methods.
\newblock \emph{Inverse Problems}, 36\penalty0 (9):\penalty0 094004, sep 2020.
\newblock \doi{10.1088/1361-6420/aba415}.
\newblock URL \url{https://doi.org/10.1088\%2F1361-6420\%2Faba415}.

\bibitem[Barbano et~al.(2022)Barbano, Leuschner, Schmidt, Denker, Hauptmann,
  Maass, and Jin]{educatedWarmStart}
Riccardo Barbano, Johannes Leuschner, Maximilian Schmidt, Alexander Denker,
  Andreas Hauptmann, Peter Maass, and Bangti Jin.
\newblock An educated warm start for deep image prior-based micro ct
  reconstruction.
\newblock \emph{IEEE Transactions on Computational Imaging}, pages 1--12, 2022.
\newblock \doi{10.1109/TCI.2022.3233188}.

\bibitem[Bubba et~al.(2016)Bubba, Hauptmann, Huotari, Rimpeläinen, and
  Siltanen]{bubba2016lotus_paper}
Tatiana~A. Bubba, Andreas Hauptmann, Simo Huotari, Juho Rimpeläinen, and
  Samuli Siltanen.
\newblock Tomographic {X-ray} data of a lotus root filled with attenuating
  objects.
\newblock \emph{Preprint, arXiv:1609.07299}, 2016.

\bibitem[Buzug(2011)]{Buzug2011}
Thorsten~M. Buzug.
\newblock \emph{Computed Tomography}, pages 311--342.
\newblock Springer Berlin Heidelberg, Berlin, Heidelberg, 2011.
\newblock ISBN 978-3-540-74658-4.
\newblock \doi{10.1007/978-3-540-74658-4_16}.
\newblock URL \url{https://doi.org/10.1007/978-3-540-74658-4_16}.

\bibitem[Dittmer et~al.(2020)Dittmer, Kluth, Maass, and
  Otero~Baguer]{Dittmer:2020}
S\"{o}ren Dittmer, Tobias Kluth, Peter Maass, and Daniel Otero~Baguer.
\newblock Regularization by architecture: a deep prior approach for inverse
  problems.
\newblock \emph{J. Math. Imaging Vision}, 62\penalty0 (3):\penalty0 456--470,
  2020.
\newblock ISSN 0924-9907.
\newblock \doi{10.1007/s10851-019-00923-x}.
\newblock URL \url{https://doi.org/10.1007/s10851-019-00923-x}.

\bibitem[Eldar(2009)]{Eldar_2009}
Yonina~C. Eldar.
\newblock Generalized {SURE} for exponential families: Applications to
  regularization.
\newblock \emph{{IEEE} Transactions on Signal Processing}, 57\penalty0
  (2):\penalty0 471--481, feb 2009.
\newblock \doi{10.1109/tsp.2008.2008212}.
\newblock URL \url{https://doi.org/10.1109%2Ftsp.2008.2008212}.

\bibitem[Engl et~al.(1996)Engl, Hanke, and Neubauer]{EnglHankeNeubauer:1996}
Heinz~W. Engl, Martin Hanke, and Andreas Neubauer.
\newblock \emph{Regularization of {I}nverse {P}roblems}.
\newblock Kluwer Academic Publishers Group, Dordrecht, 1996.
\newblock ISBN 0-7923-4157-0.

\bibitem[Heckel and Hand(2019)]{Heckel2019deep_decoder}
Reinhard Heckel and Paul Hand.
\newblock Deep decoder: Concise image representations from untrained
  non-convolutional networks.
\newblock In \emph{International Conference on Learning Representations}, 2019.
\newblock URL \url{https://openreview.net/forum?id=rylV-2C9KQ}.

\bibitem[Ito and Jin(2015)]{ItoJin:2015}
Kazufumi Ito and Bangti Jin.
\newblock \emph{{Inverse Problems: Tikhonov Theory and Algorithms}}.
\newblock World Scientific, Hackensack, NJ, 2015.
\newblock ISBN 978-981-4596-19-0.

\bibitem[Jo et~al.(2021)Jo, Chun, and Choi]{Jo_2021_ICCV}
Yeonsik Jo, Se~Young Chun, and Jonghyun Choi.
\newblock Rethinking deep image prior for denoising.
\newblock In \emph{Proceedings of the IEEE/CVF International Conference on
  Computer Vision (ICCV)}, pages 5087--5096, October 2021.

\bibitem[Leuschner et~al.(2021)Leuschner, Schmidt, Baguer, and
  Maass]{Leuschner2021lodopab}
Johannes Leuschner, Maximilian Schmidt, Daniel~Otero Baguer, and Peter Maass.
\newblock Lodopab-ct, a benchmark dataset for low-dose computed tomography
  reconstruction.
\newblock \emph{Scientific Data}, 8\penalty0 (1):\penalty0 109, 2021.
\newblock ISSN 2052-4463.
\newblock \doi{10.1038/s41597-021-00893-z}.
\newblock URL \url{https://doi.org/10.1038/s41597-021-00893-z}.

\bibitem[Liu et~al.(2018)Liu, Sun, Xu, and
  Kamilov]{https://doi.org/10.48550/arxiv.1810.12864}
Jiaming Liu, Yu~Sun, Xiaojian Xu, and Ulugbek~S. Kamilov.
\newblock Image restoration using total variation regularized deep image prior,
  2018.
\newblock URL \url{https://arxiv.org/abs/1810.12864}.

\bibitem[Metzler et~al.(2018)Metzler, Mousavi, Heckel, and
  Baraniuk]{unsuv_SURE}
Christopher~A. Metzler, Ali Mousavi, Reinhard Heckel, and Richard~G. Baraniuk.
\newblock Unsupervised learning with stein's unbiased risk estimator, 2018.
\newblock URL \url{https://arxiv.org/abs/1805.10531}.

\bibitem[Moen et~al.(2021)Moen, Chen, Holmes~III, Duan, Yu, Yu, Leng, Fletcher,
  and McCollough]{Moen2021mayo_ldct_and_projection}
Taylor~R. Moen, Baiyu Chen, David~R. Holmes~III, Xinhui Duan, Zhicong Yu,
  Lifeng Yu, Shuai Leng, Joel~G. Fletcher, and Cynthia~H. McCollough.
\newblock Low-dose ct image and projection dataset.
\newblock \emph{Medical Physics}, 48\penalty0 (2):\penalty0 902--911, 2021.
\newblock \doi{https://doi.org/10.1002/mp.14594}.
\newblock URL
  \url{https://aapm.onlinelibrary.wiley.com/doi/abs/10.1002/mp.14594}.

\bibitem[Ongie et~al.(2020)Ongie, Jalal, Baraniuk, Metzler, Dimakis, and
  Willett]{ongie2020deep}
Gregory Ongie, Ajil Jalal, Richard~G Baraniuk, Christopher~A Metzler,
  Alexandros~G Dimakis, and Rebecca Willett.
\newblock Deep learning techniques for inverse problems in imaging.
\newblock \emph{IEEE J. Sel. Areas Inform. Theory}, 1\penalty0 (1):\penalty0
  39--56, 2020.

\bibitem[Ronneberger et~al.(2015)Ronneberger, Fischer, and Brox]{Unet}
Olaf Ronneberger, Philipp Fischer, and Thomas Brox.
\newblock U-net: Convolutional networks for biomedical image segmentation,
  2015.
\newblock URL \url{https://arxiv.org/abs/1505.04597}.

\bibitem[Shi et~al.(2021)Shi, Mettes, Maji, and Snoek]{spectral_bias}
Zenglin Shi, Pascal Mettes, Subhransu Maji, and Cees G.~M. Snoek.
\newblock On measuring and controlling the spectral bias of the deep image
  prior, 2021.
\newblock URL \url{https://arxiv.org/abs/2107.01125}.

\bibitem[Sun et~al.(2022)Sun, Chen, He, Wang, Feng, Han, Ding, Cheng, Li, and
  Wang]{singularValueFinetuning}
Yanpeng Sun, Qiang Chen, Xiangyu He, Jian Wang, Haocheng Feng, Junyu Han, Errui
  Ding, Jian Cheng, Zechao Li, and Jingdong Wang.
\newblock Singular value fine-tuning: Few-shot segmentation requires
  few-parameters fine-tuning.
\newblock In \emph{NeurIPS 2022}, 2022.
\newblock URL \url{https://openreview.net/forum?id=LEqYZz7cZOI}.

\bibitem[Ulyanov et~al.(2020)Ulyanov, Vedaldi, and Lempitsky]{Ulyanov_2020}
Dmitry Ulyanov, Andrea Vedaldi, and Victor Lempitsky.
\newblock Deep image prior.
\newblock \emph{International Journal of Computer Vision}, 128\penalty0
  (7):\penalty0 1867--1888, mar 2020.
\newblock \doi{10.1007/s11263-020-01303-4}.
\newblock URL \url{https://doi.org/10.1007\%2Fs11263-020-01303-4}.

\bibitem[van Aarle et~al.(2015)van Aarle, Palenstijn, {De Beenhouwer},
  Altantzis, Bals, Batenburg, and Sijbers]{aarle2015astra}
Wim van Aarle, Willem~Jan Palenstijn, Jan {De Beenhouwer}, Thomas Altantzis,
  Sara Bals, K.~Joost Batenburg, and Jan Sijbers.
\newblock The {ASTRA Toolbox}: A platform for advanced algorithm development in
  electron tomography.
\newblock \emph{Ultramicroscopy}, 157:\penalty0 35--47, 2015.
\newblock \doi{https://doi.org/10.1016/j.ultramic.2015.05.002}.

\bibitem[Wang et~al.(2021)Wang, Li, Zhuang, Chen, Liang, and
  Sun]{https://doi.org/10.48550/arxiv.2112.06074}
Hengkang Wang, Taihui Li, Zhong Zhuang, Tiancong Chen, Hengyue Liang, and
  Ju~Sun.
\newblock Early stopping for deep image prior, 2021.
\newblock URL \url{https://arxiv.org/abs/2112.06074}.

\end{thebibliography}

\clearpage
\appendix

\section{The Singular Value Decomposition is Sensible}\label{SVD_proof}
While convolutions are linear and can thus be represented using a matrix for a fixed size input, these matrices are distinct from the input-size invariant matrices that we fold our tensors into. The former, for a tensor $W\in \R^{C_{out}\times C_{in} \times K \times K}$ are of the shape $\hat{W}\in \R^{C_{out} J^2 \times C_{in}J^2}$, for input images in $\R^{J \times J}$, while the folded matrix is of the form $W'\in \R^{C_{out}\times C_{in}K^2}$. For the former, it is direct that the SVD's use of matrix multiplication to compose the derived linear transformations yields the original convolution. For the latter, it may not be evident at first glance whether the multiplicative structure of the folded matrices is consistent with the composition of the convolutions they encode. Fortunately this is indeed the case. The key reason for this is that we unfold into $1 \times 1$ convolutions on the left. In the following, we demonstrate this rigorously:\\
For any integer $N$, we denote by the notation $[N]$ the set $\{1,\ldots,N\}$. Then for any integers $N$, $M$, (odd) $K$ and $J$, let 
\begin{align*}
    W&:=(w_{n,m,k_1,k_2})_{(n,m,k_1,k_2)\in [N]\times[M]\times [K]^2}\in \R^{N \times M\times K\times K},\\
X&:=(x_{m,j_1,j_2})_{(m,j_1,j_2)\in [M]\times [J]^2}\in \R^{M \times J \times J},
\end{align*} 
be two tensors. Then we can define the 2d convolution of $X$ with $W$ as a tensor
$$\convtwod{W}{X}:=(y_{n,j_1,j_2})_{(n,j_1,j_2)\in [N]\times [J]^2}\in \R^{N\times J \times J}$$ 
with
\begin{align*}
y_{n,j_1,j_2}:=\sum_{m=1}^M \sum_{k_1=1}^K\sum_{k_2=1}^K w_{n,m,k_1,k_2} \cdot X_{m,j_1+k_1-\frac{K-1}{2}-1,j_2+k_2-\frac{K-1}{2}-1}
\end{align*}
with $X_{m,i_1,i_2}:=0$ when either of the indices $i_1,i_2$ is outside of $[J]$.
\\
\\
Then, for a given weight tensor $W\in \R^{N \times M \times K \times K}$ that can be represented by a matrix $A'B'=W'\in \R^{N\times MK^2}$, and a given feature map $X \in \R^{M \times J \times J}$, taking the 2-dimensional convolution of $X$ with respect to $W$ is identical to convolving $X$ by appropriate tensor representations of $B'\in \R^{R \times M \cdot K \cdot K}$ and (then) $A'\in \R^{N \times R \cdot 1 \cdot 1}$. Note that the second tensor only codes $1\times 1$ convolutions. To prove this result, we need the following lemma:
\begin{lemma}
Let $M,K\in \N_{>0}$. %
Given a bijection between indices $f:[M]\times [K]^2\rightarrow [MK^2]$, we can define a map that folds tensors into matrices:
\begin{align*}
\Phi_f:\biguplus_{N=1}^\infty \R^{N \times M \times K \times K}\rightarrow & \biguplus_{N=1}^\infty \R^{N \times MK^2}\\
(x_{n,m,k_1,k_2})_{(n,m,k_1,k_2)\in [N]\times [M]\times [K]^2} \mapsto & (x_{n,f^{-1}(i)})_{(n,i)\in [N]\times [MK^2]}
\end{align*} 
Let $g:[M]\times [K]^2\rightarrow [MK^2]$ be a bijection, $N,R,J\in \N_{>0}$ be tensor dimensions, $X\in \R^{M \times J \times J}$ an input feature map.
Let $W:=(w_{n,m,k_1,k_2})_{(n,m,k_1,k_2)\in [N]\times[M]\times [K]^2}\in \R^{N \times M\times K\times K}$ be a tensor. Fix a bijection of indices $g:[M]\times [K]^2\rightarrow [MK^2]$. Define $W':=\Phi_g^{-1}(W)$. Let $R\in \N_{>0}$, $A'\in \R^{N\times R}$, $B'\in\R^{R\times MK^2}$ such that $W'=A'B'$. Further, define $A:=\Phi_{(\text{Id}_{[R]},\text{Id}_{[1]},\text{Id}_{[1]})^{-1}}^{-1}(A'), \;
B:=\Phi_g^{-1}(B')$ as the ''unfolded'' tensors of $A',B'$. 
Let $X\in \R^{M \times J \times J}$ be an arbitrary tensor/feature map. Then the following identity holds: 
\begin{equation*}
\convtwod{W}{X}=\convtwod{A}{\convtwod{B}{X}}
\end{equation*}
\end{lemma}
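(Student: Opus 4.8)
The plan is to prove the identity by directly expanding both sides according to the definition of $\convtwod{\cdot}{\cdot}$ and then matching, for every output index $(n,j_1,j_2)$, the coefficient multiplying each input entry of $X$. Writing $\delta:=\tfrac{K-1}{2}$ for brevity, the left-hand side expands to $\convtwod{W}{X}_{n,j_1,j_2}=\sum_{m,k_1,k_2} w_{n,m,k_1,k_2}\,X_{m,\,j_1+k_1-\delta-1,\,j_2+k_2-\delta-1}$, where the sum runs over $m\in[M]$ and $k_1,k_2\in[K]$, and out-of-range entries of $X$ are read as zero per the boundary convention.

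For the right-hand side I would proceed in two stages. Setting $Z:=\convtwod{B}{X}\in\R^{R\times J\times J}$, the inner convolution gives $Z_{r,j_1,j_2}=\sum_{m,k_1,k_2} b_{r,m,k_1,k_2}\,X_{m,\,j_1+k_1-\delta-1,\,j_2+k_2-\delta-1}$. The decisive observation is that $A$ codes only $1\times 1$ convolutions: with $K=1$ the spatial index collapses to $k_1=k_2=1$ and the shift $1-\tfrac{1-1}{2}-1=0$ vanishes, so the outer convolution is a purely pointwise channel mixing, $\convtwod{A}{Z}_{n,j_1,j_2}=\sum_{r} a_{n,r,1,1}\,Z_{r,j_1,j_2}$, introducing no new spatial shift.

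Next I would substitute $Z$ and interchange the finite sums, so that the right-hand side becomes $\sum_{m,k_1,k_2} (\sum_r a_{n,r,1,1}\,b_{r,m,k_1,k_2})\,X_{m,\,j_1+k_1-\delta-1,\,j_2+k_2-\delta-1}$. Since the spatial shifts now coincide exactly with those on the left, the two sides agree for every input $X$ if and only if $w_{n,m,k_1,k_2}=\sum_r a_{n,r,1,1}\,b_{r,m,k_1,k_2}$ holds for all indices. I would discharge this last identity by unwinding the folding maps: the definitions of $A$, $B$ and $W'$ give the entrywise dictionary $a_{n,r,1,1}=A'_{n,r}$, $b_{r,m,k_1,k_2}=B'_{r,g(m,k_1,k_2)}$ and $w_{n,m,k_1,k_2}=W'_{n,g(m,k_1,k_2)}$, whereupon the hypothesis $W'=A'B'$ read off at column $g(m,k_1,k_2)$ states precisely $W'_{n,g(m,k_1,k_2)}=\sum_r A'_{n,r}B'_{r,g(m,k_1,k_2)}$, which is the required coefficient identity.

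The conceptual crux --- and the only point where the argument would break if stated carelessly --- is the alignment of the index bijections: the same $g$ must fold both $W$ and $B$, while $A$ is unfolded through the canonical identification $[R]\times[1]^2\cong[R]$. This matching is what lets plain matrix multiplication of the folded matrices correspond to composition of the two convolutions, and it works only because the left factor $A$ is spatially trivial, so all spatial structure is carried by $B$ and lines up with $W$ column-for-column under $g$. The remaining work is pure bookkeeping through $\Phi_g$, which the explicit entry dictionary above renders routine; there is no analytic difficulty, as all sums are finite and the boundary-zero convention for $X$ is respected identically on both sides.
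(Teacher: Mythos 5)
Your proposal is correct and follows essentially the same route as the paper's proof: expand both convolutions, use the fact that the outer factor is a spatially trivial $1\times 1$ convolution so it acts as pointwise channel mixing, interchange the finite sums, and reduce to the entrywise consequence of $W'=A'B'$ under the shared folding bijection $g$. Your version merely reorganizes the paper's chain of equalities into an explicit coefficient-matching step, and is if anything a bit cleaner about the index bookkeeping (the paper's displayed computation reuses $\pi_3$ where $\pi_4$ is meant and lets $m$ linger after passing to the folded index $i$); there is no substantive difference in the argument.
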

\begin{proof}
We define the following tensors:
\begin{align*}
    (a_{n,r,k_1,k_2})_{(n,r,k_1,k_2)\in [N] \times [R]\times [1]\times [1]}&:=A \\
    (b_{r,m,k_1,k_2})_{(r,m,k_1,k_2)\in [R] \times [M] \times [K]^2}&:=B\\
    (y_{r,j_1,j_2})_{(r,j_1,j_2)\in [R]\times [J]^2}&:=Y:=\convtwod{B}{X}
\end{align*}
Further, let $\pi_3,\pi_4$ be the mappings that extract the third and fourth entries of a given tuple. Then the proof reduces to the following elementary argument, exploiting the linearity of the operators involved:
\begin{align*}
&(\convtwod{A}{\convtwod{B}{X}})_{n,j_1,j_2}=(\convtwod{A}{Y})_{r,j_1,j_2}\\
&=\sum_{r=1}^R \sum_{k_1=1}^1\sum_{k_2=1}^1 a_{n,r,k_1,k_2} \cdot Y_{r,j_1+k_1-\frac{1-1}{2}-1,j_2+k_2-\frac{1-1}{2}-1}\\
&=\sum_{r=1}^R a_{n,r,1,1} \cdot Y_{r,j_1,j_2}\\
&=\sum_{r=1}^R a_{n,r,1,1} \cdot \sum_{m=1}^M \sum_{k_1=1}^K\sum_{k_2=1}^K b_{r,m,k_1,k_2} \cdot X_{m,j_1+k_1-\frac{K-1}{2}-1,j_2+k_2-\frac{K-1}{2}-1}\\
&=\sum_{r=1}^R a_{n,r} \cdot \sum_{i=1}^{MK^2} b_{r,i} \cdot X_{m,j_1+\pi_3(\Phi_g^{-1}(i))-\frac{K-1}{2}-1,j_2+\pi_3(\Phi_g^{-1}(i))-\frac{K-1}{2}-1}\\
&=\sum_{i=1}^{MK^2} \sum_{r=1}^R a_{n,r} \cdot b_{r,i} \cdot X_{m,j_1+\pi_3(\Phi_g^{-1}(i))-\frac{K-1}{2}-1,j_2+\pi_3(\Phi_g^{-1}(i))-\frac{K-1}{2}-1}\\
&=\sum_{i=1}^{MK^2} w_{n,i} \cdot X_{m,j_1+\pi_3(\Phi_g^{-1}(i))-\frac{K-1}{2}-1,j_2+\pi_3(\Phi_g^{-1}(i))-\frac{K-1}{2}-1}\\
&=\sum_{m=1}^{M}\sum_{k=1}^K\sum_{k_2=1}^K w_{n,m,k_1,k_2} \cdot X_{m,j_1+k_1-\frac{K-1}{2}-1,j_2+k_2-\frac{K-1}{2}-1}\\
&=(\convtwod{W}{X})_{n,j_1,j_2}.
\end{align*}
This completes the proof of the claim and hence also the lemma.
\end{proof} 

Note that in the statement, the embedding of $A'$ as a tensor does not permute its indices at all, only adding two more dimensions of size 1. Meanwhile, $B'$ is subjected to the same index transform $g^{-1}$ as the original matrix $W'$.

By applying the lemma twice, first for $U'\cdot (S'V')$, then for $U'\cdot V'$, we deduce that we can replace a convolutional layer with the three induced by the SVD decomposition.

\clearpage
\section{Experimental setup details}\label{experiment_setup_details}
\label{apd:experiment_details}

Python code is publicly available at \url{https://github.com/educating-dip/svd_dip}.

For the Lotus root dataset \cite{bubba2016lotus_paper}, we use the forward operator matrix that has been provided with the dataset to implement fan-beam geometry. We use a TV reconstruction from all 120 angles as our ground truth reference, while our reconstruction task uses a sparse-view setting with 20 angles. The data is real-measured and thus already contains noise.

For the medical datasets, we use the ODL library \cite{adler2018odl} with the CUDA-accelerated ASTRA \cite{aarle2015astra} backend to implement the parallel-beam geometries with 200 and 1000 angles, respectively.
We simulate projections with (pre-log) Poisson noise, i.e., the full post-log model is given by
\begin{equation*}
    A x + \nu = y, \qquad \nu = -A x - \ln(N_1 / N_0), \qquad N_1 \sim \text{Pois}(N_0 \exp(-A x)),
    \label{eq:discrete_model_lodopab}
\end{equation*}
where $N_0=4096$ is the number of photons per detector pixel for an empty scan.
Here, $x$ denotes the linear attenuation coefficients obtained from the Hounsfield unit values $x_\text{HU}$ stored in DICOM files via $x = (20-0.02) x_\text{HU} / 1000 + 20$.
After simulation, both images and projection data are divided by $\mu_\mathrm{max} = {81.35858}$ to normalize images into the range $[0,1]$.
We use the appropriate Poisson regression loss that maximizes the likelihood under this model as our data discrepancy loss,
\begin{align}
 L_\mathrm{Pois}(A\cdot{}\,|\,y) = -\sum_{j=1}^m N_0 \exp(-y_j\mu_\mathrm{max}) (-(A\cdot{})_j\mu_\mathrm{max} + \ln(N_0)) - N_0 \exp(-(A\cdot{})_j\mu_\mathrm{max}),
 \label{eq:pois_reg_loss}
\end{align}
instead of the term $||A\cdot-y||_2^2$ in \autoref{eqn:objective} for all methods.

\begin{table}[h]
    \centering
    \begin{tabular}{llll}
        Dataset to reconstruct & Noise & Data loss & $\gamma$ \\\hline
        Lotus root (Sparse 20) & real-measured & $\frac{1}{n}||A\cdot-y||_2^2$ & 1e-4 \\
        LoDoPaB (Sparse 200): Chest & Poisson & $L_\mathrm{Pois}(A\cdot{}\,|\,y)$ & 4\\
        Mayo (Sparse 200): Chest, Abdomen, Head & Poisson & $L_\mathrm{Pois}(A\cdot{}\,|\,y)$ & 7\\
        Mayo (Sparse 1000): Chest, Abdomen, Head & Poisson & $L_\mathrm{Pois}(A\cdot{}\,|\,y)$ & 7\\
    \end{tabular}
    \caption{Noise type, data losses and regularization parameters used for the different datasets. The respective data loss is used in place of $||A\cdot-y||_2^2$ in the objective \autoref{eqn:objective}. The same choices are used for DIP, EDIP and SVD-DIP.}
    \label{tab:tv_regularization}
\end{table}

\FloatBarrier
\clearpage
\section{U-Net architecture choice and optimization hyperparameters}
\label{apd:unet_arch_choice}

\begin{figure}[htb]
\floatconts
  {fig:unet}
  {\caption{U-Net architectures}}
  {
  \subfigure[{ U-Net used for all experiments on Ellipses-Lotus and Ellipses-LoDoPaB, as well as for non-pretrained DIP on LoDoPaB-Mayo.}]{\includegraphics[width=0.75\linewidth]{plots/Unet_ellipses.pdf}}\\[1.5em]
  \subfigure[{ U-Net used for EDIP and SVD-DIP on LoDoPaB-Mayo. Pretrained network weights were taken from \cite{Baguer_2020}.}]{\includegraphics[width=0.75\linewidth]{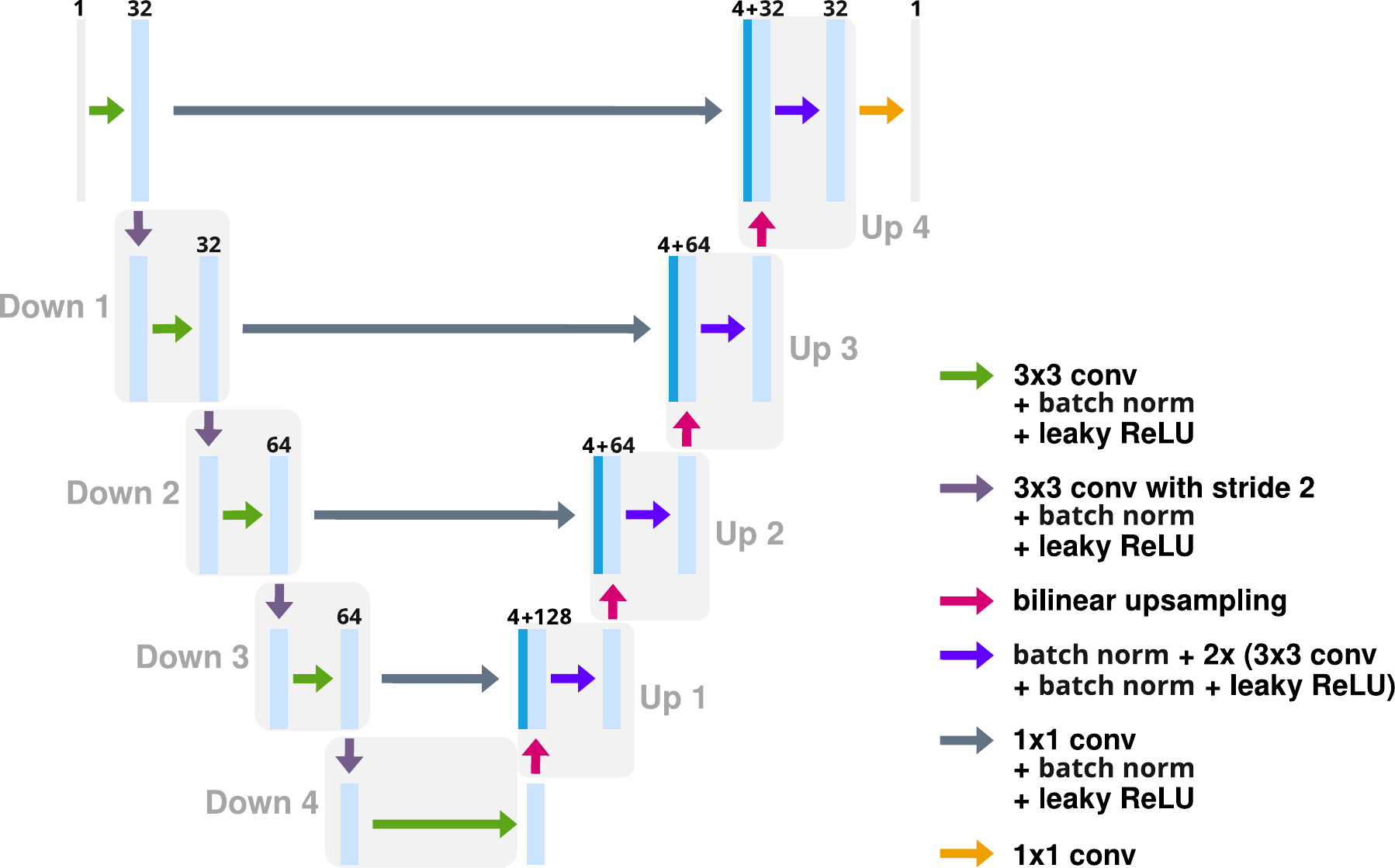}}
  }
\end{figure}

{
We use U-Net architectures for all our experiments, which are shown in \autoref{fig:unet}.
We use Adam to optimize the objective \autoref{eqn:objective}.
Our default learning rate is 1e-4, and we apply a gradient clipping with maximum norm defaulted to 1.

We find that the gradient clipping maximum norm needs to be \emph{well-adjusted} in order to be able to optimize a DIP when using an \emph{output sigmoid activation}.
This is the case for the non-pretrained DIP on LoDoPaB-Mayo, where we fine-tune the gradient clipping maximum norm to 1e-3.
However, in all other results shown in the main text we use the default gradient clipping maximum norm of 1.
For Ellipses-Lotus and Ellipses-LoDoPaB, we use the architecture shown in \autoref{fig:unet}(a), which is the same as the one used in \cite[Table B5]{Baguer_2020} for LoDoPaB (except for the fact that we use group norm instead of batch norm layers). 

For LoDoPaB-Mayo, we utilize the publicly available network parameters of FBP+U-Net on LoDoPaB/LoDoPaB-200 from \cite{Baguer_2020} to initialize EDIP and SVD-DIP using the architecture in \autoref{fig:unet}(b).
We note that this architecture is sub-optimal for non-pretrained (vanilla) DIP reconstruction. 
We observe that different architectures can be favorable for DIP compared to EDIP and SVD-DIP, and vice versa.
We use the FBP+U-Net architecture from \cite{Baguer_2020} (\autoref{fig:unet}(b)) for EDIP and SVD-DIP on LoDoPaB-Mayo, which we find to perform comparatively well on Chest and Abdomen, while for non-pretrained DIP on LoDoPaB-Mayo we use the architecture shown in \autoref{fig:unet}(a), based on the DIP architecture in \cite{Baguer_2020}.

As an ablation study and to ensure a fair comparison between chosen architectures, we include additional investigations. 
We first pretrain on LoDoPaB with the architecture shown in \autoref{fig:unet}(a) for 40 epochs. Here we report our findings. 
\begin{itemize}
    \item \autoref{fig:scatter_different_arch_vanilla_dip} and \autoref{fig:psnrtrace_lodo_mayo_1000_different_arch_vanilla_dip} show results when using the architecture in \autoref{fig:unet}(b) for all methods, including the non-pretrained DIP. The non-pretrained DIP performs considerably worse in terms of max PSNR when compared to our main results in \autoref{fig:scatter_multiplot} (left) and \autoref{fig:psnrtrace_lodo_mayo_1000}, where the architecture in \autoref{fig:unet}(a) is used for the non-pretrained DIP.
    \item \autoref{fig:scatter_different_arch} and \autoref{fig:psnrtrace_lodo_mayo_1000_different_arch} show results when using the architecture in \autoref{fig:unet}(a) for all methods, including EDIP and SVD-DIP. While the final PSNR of EDIP and SVD-DIP here is better in many cases compared to our main results in \autoref{fig:scatter_multiplot} and \autoref{fig:psnrtrace_lodo_mayo_1000}, this advantage comes at the cost of highly fine-tuned optimization hyperparameters. Specifically, non-pretrained DIP worked best with a learning rate of 1e-4 and a gradient clipping maximum norm of 1e-3. For EDIP, we needed to lower the learning rate to 1e-5, again with a gradient clipping maximum norm of 1e-3.
    For SVD-DIP instead, the gradient clipping maximum norm needed to be set to 1, while using a learning of 1e-5 on Chest and Abdomen, and a learning rate of 1e-4 on Head.
    Part of the difficulty in tuning hyperparameters can be attributed to the sigmoid output activation, which we observe to require a well-tuned gradient clipping maximum norm.
    In \autoref{fig:scatter_different_arch} and \autoref{fig:psnrtrace_lodo_mayo_1000_different_arch}, the lowered learning rate (1e-5) of EDIP naturally slows down its overfitting, so the rather high final PSNR values of EDIP are partially just an effect of slow optimization, thus not a clear indicator of stability; indeed, slow overfitting is still observed, in contrast to SVD-DIP, which remains stable.
    We also note that the max PSNR of EDIP for Chest and Abdomen in \autoref{fig:scatter_different_arch} (left) is on par when compared to our main results in \autoref{fig:scatter_multiplot} (left), so unlike the non-pretrained DIP, in these cases EDIP does not benefit from the architecture in \autoref{fig:unet}(a) in terms of max PSNR.
\end{itemize}

To summarize our findings: the standard DIP usually benefits from overparameterization, omitting skip connections at high-resolution scales, and from a final sigmoid activation (to a different degree depending on the image class). Pretrained DIP (EDIP and SVD-DIP) can leverage skip connections, and while a final sigmoid activation can also improve EDIP and SVD-DIP, it needs a careful fine-tuning of the optimization hyperparameters. 
}

\begin{figure}[h!tbp]
\floatconts
  {fig:scatter_different_arch_vanilla_dip}
  {
  \caption{This figure is like \autoref{fig:scatter_multiplot} (left), except that DIP here also uses the architecture shown in \autoref{fig:unet}(b).}
  }{
\includegraphics[width=0.45\textwidth]{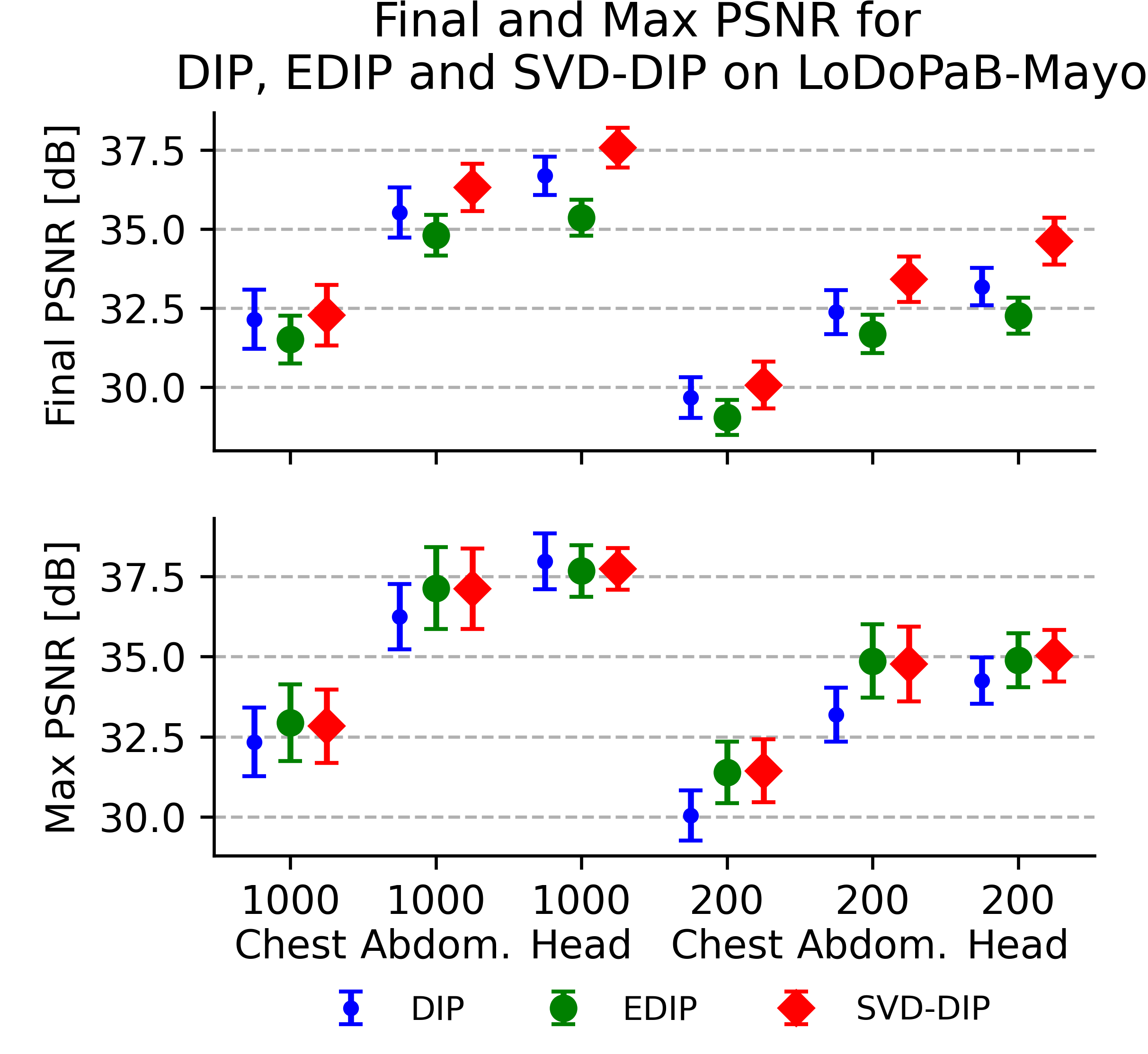}
\hfill
\vspace{-1.5em}
  }\vspace{-0.5em}
\end{figure}

\begin{figure}[t]
\floatconts
  {fig:psnrtrace_lodo_mayo_1000_different_arch_vanilla_dip}
  {\caption{This figure is like \autoref{fig:psnrtrace_lodo_mayo_1000}, except that DIP here also uses the architecture shown in \autoref{fig:unet}(b).
 }}
  {\includegraphics[width=1\linewidth]{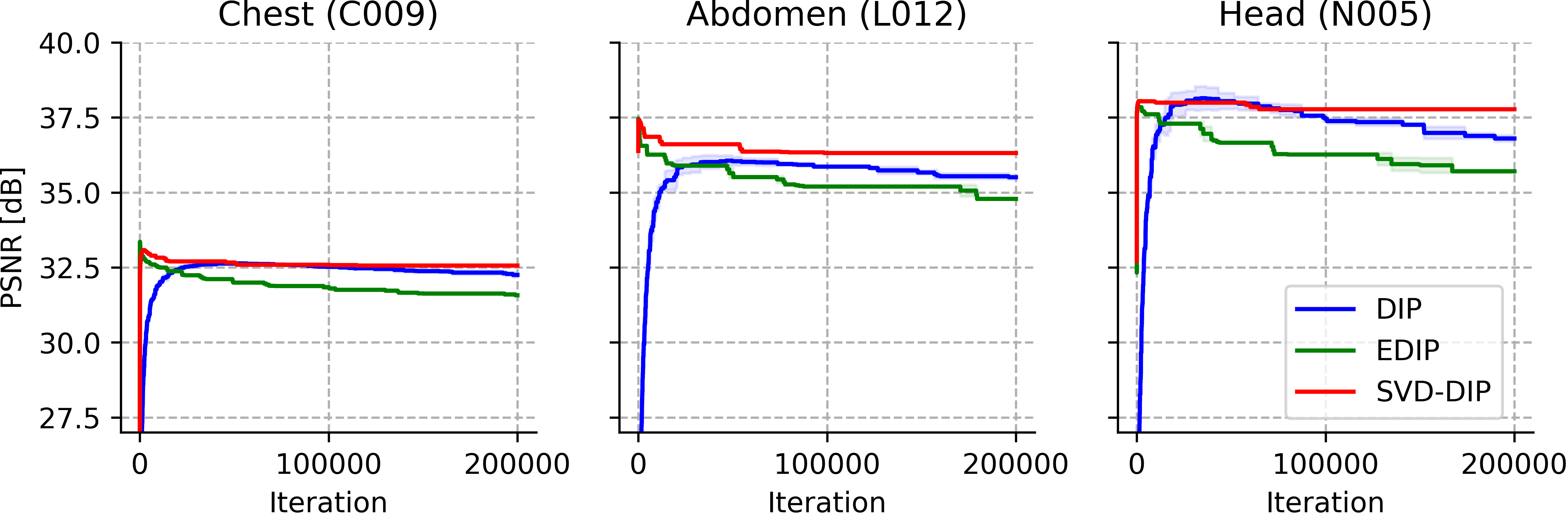}
  \vspace{-3em}}\vspace{-1.5em}
\end{figure}

\begin{figure}[h!tbp]
\floatconts
  {fig:scatter_different_arch}
  {
  \caption{This figure is like \autoref{fig:scatter_multiplot}, except that EDIP and SVD-DIP here also use the architecture shown in \autoref{fig:unet}(a).}
  }{
\begin{minipage}{0.45\textwidth}
\includegraphics[width=\textwidth]{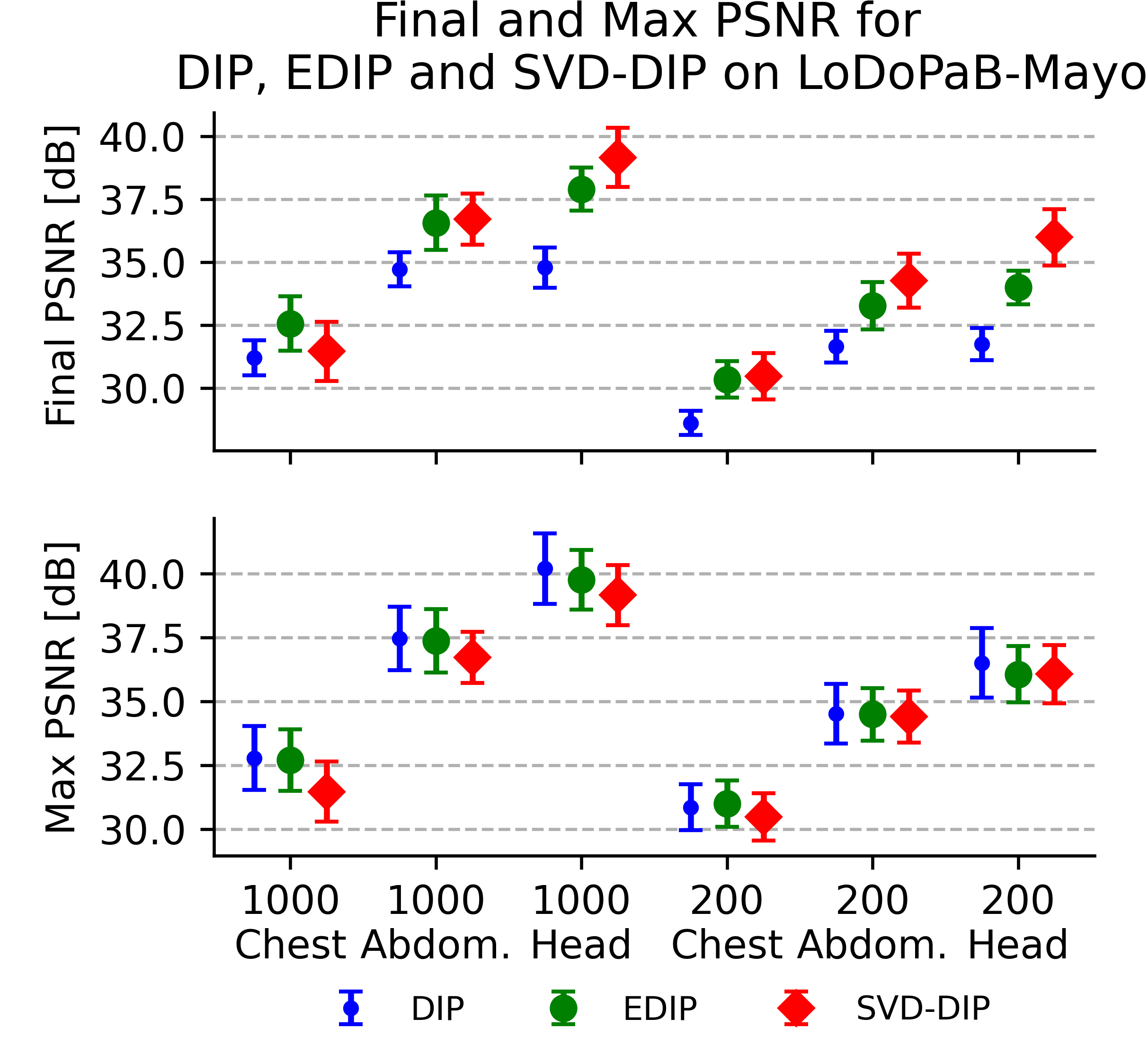}
\end{minipage}\hfill
\begin{minipage}{0.54\textwidth}
\includegraphics[width=\textwidth]{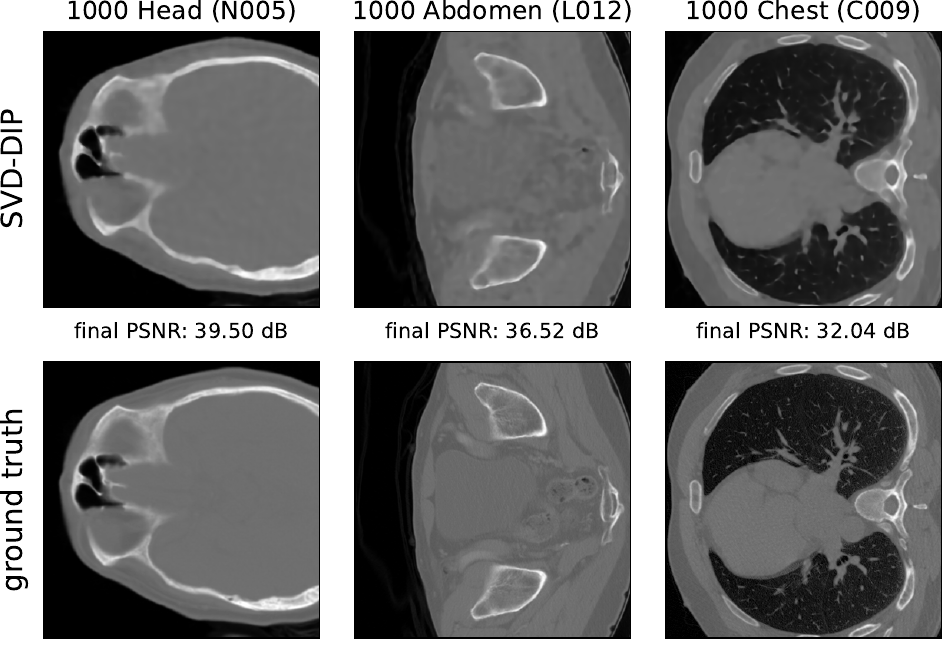}
\end{minipage}
\vspace{-1.5em}
  }\vspace{-0.5em}
\end{figure}

\begin{figure}[t]
\floatconts
  {fig:psnrtrace_lodo_mayo_1000_different_arch}
  {\caption{This Figure is like \autoref{fig:psnrtrace_lodo_mayo_1000}, except that EDIP and SVD-DIP here also use the architecture shown in \autoref{fig:unet}(a).
 }}
  {\includegraphics[width=1\linewidth]{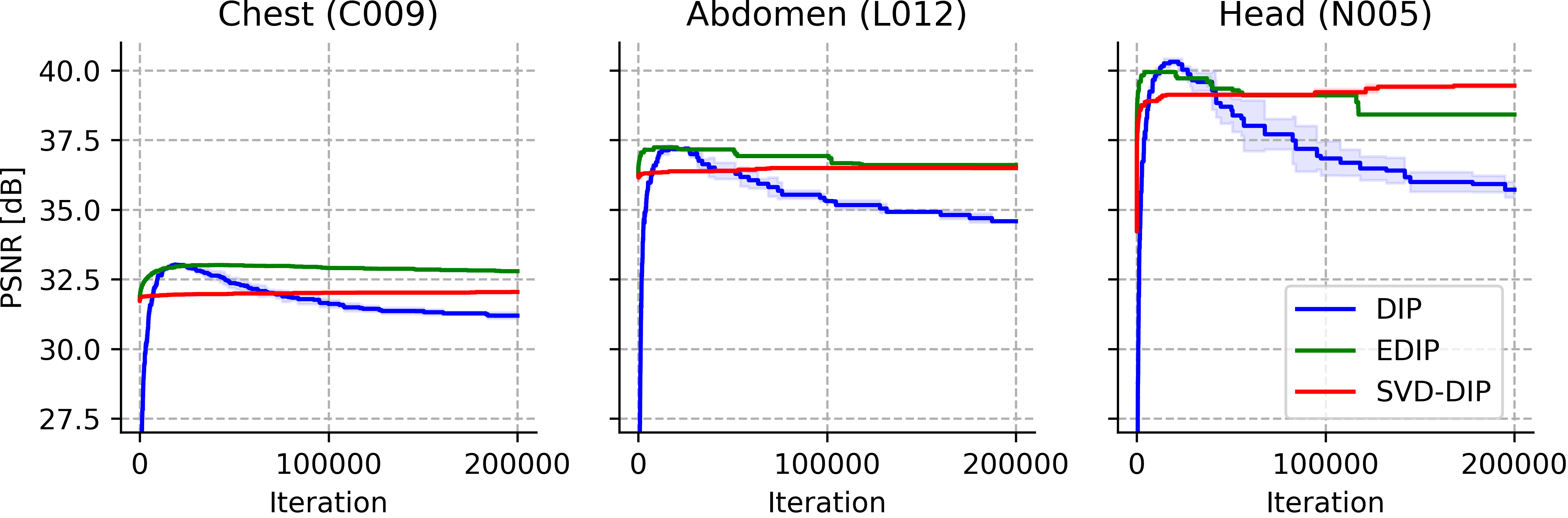}
  \vspace{-3em}}\vspace{-1.5em}
\end{figure}

\clearpage
\FloatBarrier

\clearpage
\section{Additional results}

\begin{table}[htbp]
\floatconts
  {tab:lotus}%
  {\caption{PSNR values for Ellipses-Lotus (Sparse 20), 200000 Iterations.}}%
{\addtolength{\tabcolsep}{-0.22pt}%
\begin{tabular}{l|l|l|r|r}
\hline
Method                            & Final   & Max (Iteration)   & Max$-$Final & Final$-$Init \\ \hline
DIP                               & 31.07 & 31.69 (5878)  & 0.6166   & 22.12     \\ %
EDIP                              & 31.09 & 31.95 (1289)  & 0.8541   & 4.11      \\ %
SVD-DIP (no reduction)            & 31.60 & 31.60 (198047) & 0.0000  & 4.62       \\ %
SVD-DIP (50\% rd.)                & 31.64 & 31.64 (189325) & 0.0063  & 5.07       \\ %
SVD-DIP (10\% thresh.)            & 31.58 & 31.58 (195288) & 0.0048   & 4.69      \\ %
SVD-DIP (no pretraining, no red.) & 29.38 & 29.38 (198239) & 0.0000       & 20.44  \\ \hline
\end{tabular}
}
\end{table}

\begin{table}[htbp]
\floatconts
  {tab:lodo}%
  {\caption{PSNR values for Ellipses-LoDoPaB (Sparse 200) samples, 50000 Iterations.}}%
{\addtolength{\tabcolsep}{-0.22pt}
\begin{tabular}{rr|r||l||r|r||r|r}
               & \multicolumn{2}{c}{DIP}           & \multicolumn{1}{c}{Pretraining} & \multicolumn{2}{c}{EDIP}          & \multicolumn{2}{c}{SVD-DIP}       \\ \hline
Test Set Index & Final & Max   & Init        & Final & Max   & Final   & Max     \\ \hline
6              & 34.81 & 37.77 & 32.6        & 35.07 & 37.62 & 37.24   & 37.28   \\ %
9              & 30.67 & 34.59 & 30.22       & 30.98 & 34.42 & 33.87   & 33.93   \\ %
11             & 29.69 & 33.93 & 28.65       & 30.55 & 33.6  & 33.05   & 33.11   \\ %
13             & 36.17 & 39.74 & 35.57       & 36.97 & 39.4  & 38.86   & 38.91   \\ %
17             & 31.63 & 36.39 & 31.12       & 31.81 & 35.85 & 35.29   & 35.45   \\ %
19             & 31.33 & 34.05 & 30.48       & 31.47 & 33.77 & 33.52   & 33.71   \\ %
22             & 30.56 & 33.46 & 31.14       & 30.97 & 33.33 & 32.86   & 33.04   \\ %
29             & 32.76 & 36.01 & 31.43       & 32.94 & 35.87 & 35.27   & 35.42   \\ %
34             & 31.47 & 33.06 & 24.76       & 31.4  & 32.84 & 32.73   & 32.85   \\ %
43             & 31.75 & 34.42 & 29.13       & 31.7  & 34.2  & 33.78   & 33.91   \\ \hline
Mean           & 32.08 & 35.34 & 30.51       & 32.39 & 35.09 & 34.65   & 34.76   \\ \hline
\end{tabular}
}
\end{table}

\begin{table}[htbp]
\floatconts
  {tab:mayo_c_1000}%
  {\caption{PSNR values for LoDoPaB-Mayo (Full 1000), Chest/C samples, 40000 iterations.}}%
{\addtolength{\tabcolsep}{-0.22pt}%
\begin{tabular}{rr|r||l||r|r||r|r}
               & \multicolumn{2}{c}{DIP}           & \multicolumn{1}{c}{Pretraining} & \multicolumn{2}{c}{EDIP}          & \multicolumn{2}{c}{SVD-DIP}       \\ \hline
Patient & Final & Max   & Init        & Final & Max   & Final   & Max     \\ \hline
C001 & 31.85 & 34.55 & 23.31 & 32.30 & 34.96 & 33.53 & 34.60 \\
C002 & 29.95 & 30.83 & 24.44 & 30.23 & 31.11 & 30.75 & 31.03 \\
C004 & 31.34 & 32.44 & 24.01 & 31.49 & 32.57 & 32.07 & 32.42 \\
C009 & 31.19 & 33.05 & 20.86 & 31.57 & 33.36 & 32.56 & 33.08 \\
C012 & 30.49 & 31.75 & 25.96 & 30.67 & 32.20 & 31.41 & 32.15 \\
C016 & 31.43 & 32.53 & 25.58 & 31.61 & 32.69 & 32.17 & 32.58 \\
C019 & 32.20 & 34.63 & 22.11 & 32.66 & 34.50 & 33.84 & 34.54 \\
C021 & 31.69 & 32.94 & 19.77 & 32.05 & 32.48 & 32.25 & 32.54 \\
C023 & 30.27 & 31.18 & 23.22 & 30.56 & 31.47 & 31.14 & 31.49 \\
C027 & 31.55 & 33.91 & 24.48 & 31.92 & 34.01 & 33.01 & 33.89 \\ \hline
Mean & 31.20 & 32.78 & 23.38 & 31.51 & 32.93 & 32.27 & 32.83 \\ \hline
\end{tabular}
}
\end{table}

\begin{table}[htbp]
\floatconts
  {tab:mayo_c_200}%
  {\caption{PSNR values for LoDoPaB-Mayo (Sparse 200), Chest/C samples, 100000 Iterations.}}%
{\addtolength{\tabcolsep}{-0.22pt}%
\begin{tabular}{rr|r||l||r|r||r|r}
               & \multicolumn{2}{c}{DIP}           & \multicolumn{1}{c}{Pretraining} & \multicolumn{2}{c}{EDIP}          & \multicolumn{2}{c}{SVD-DIP}       \\ \hline
Patient & Final & Max   & Init        & Final & Max   & Final   & Max     \\ \hline
C001 & 28.95 & 31.94 & 32.29 & 29.64 & 33.04 & 30.93 & 33.09 \\
C002 & 27.83 & 29.32 & 29.71 & 28.09 & 30.01 & 28.94 & 30.10 \\
C004 & 28.81 & 30.93 & 28.06 & 29.16 & 31.22 & 29.97 & 31.11 \\
C009 & 28.55 & 30.90 & 30.17 & 29.08 & 31.63 & 30.27 & 31.76 \\
C012 & 28.00 & 29.94 & 29.82 & 28.33 & 30.67 & 29.34 & 30.85 \\
C016 & 28.72 & 30.65 & 28.02 & 29.06 & 30.96 & 29.83 & 31.02 \\
C019 & 29.54 & 32.38 & 27.07 & 30.03 & 32.55 & 31.54 & 32.79 \\
C021 & 28.90 & 30.96 & 16.45 & 29.24 & 31.12 & 29.94 & 30.85 \\
C023 & 28.11 & 29.92 & 24.39 & 28.54 & 30.30 & 29.41 & 30.36 \\
C027 & 28.75 & 31.60 & 31.52 & 29.24 & 32.45 & 30.51 & 32.46 \\ \hline
Mean & 28.62 & 30.85 & 27.75 & 29.04 & 31.39 & 30.07 & 31.44 \\ \hline
\end{tabular}
}
\end{table}

\begin{table}[htbp]
\floatconts
  {tab:mayo_l_1000}%
  {\caption{PSNR values for LoDoPaB-Mayo (Full 1000), Abdomen/L samples, 200000 Iterations.}}%
{\addtolength{\tabcolsep}{-0.22pt}%
\begin{tabular}{rr|r||l||r|r||r|r}
               & \multicolumn{2}{c}{DIP}           & \multicolumn{1}{c}{Pretraining} & \multicolumn{2}{c}{EDIP}          & \multicolumn{2}{c}{SVD-DIP}       \\ \hline
Patient & Final & Max   & Init        & Final & Max   & Final   & Max     \\ \hline
L004 & 35.05 & 37.74 & 32.04 & 35.12 & 37.56 & 36.80 & 37.37 \\
L006 & 35.68 & 39.49 & 35.32 & 35.44 & 38.70 & 37.44 & 38.81 \\
L012 & 34.58 & 37.26 & 36.40 & 34.79 & 37.46 & 36.31 & 37.42 \\
L014 & 35.51 & 39.32 & 36.20 & 35.42 & 39.09 & 37.12 & 38.91 \\
L019 & 34.70 & 37.59 & 36.10 & 34.16 & 37.94 & 36.11 & 38.09 \\
L024 & 34.98 & 37.86 & 22.22 & 35.43 & 36.55 & 36.63 & 36.63 \\
L027 & 34.12 & 36.56 & 27.16 & 34.81 & 35.87 & 35.93 & 35.93 \\
L030 & 33.77 & 35.45 & 23.51 & 33.83 & 35.02 & 34.97 & 35.01 \\
L033 & 33.58 & 35.87 & 32.22 & 33.67 & 35.56 & 35.21 & 35.60 \\
L035 & 35.18 & 37.45 & 32.23 & 35.33 & 37.54 & 36.65 & 37.33 \\ \hline
Mean & 34.72 & 37.46 & 31.34 & 34.80 & 37.13 & 36.32 & 37.11 \\ \hline
\end{tabular}
}
\end{table}

\begin{table}[htbp]
\floatconts
  {tab:mayo_l_200}%
  {\caption{PSNR values for LoDoPaB-Mayo (Sparse 200), Abdomen/L samples, 100000 Iterations.}}%
{\addtolength{\tabcolsep}{-0.22pt}%
\begin{tabular}{rr|r||l||r|r||r|r}
               & \multicolumn{2}{c}{DIP}           & \multicolumn{1}{c}{Pretraining} & \multicolumn{2}{c}{EDIP}          & \multicolumn{2}{c}{SVD-DIP}       \\ \hline
Patient & Final & Max   & Init        & Final & Max   & Final   & Max     \\ \hline
L004 & 31.87 & 35.30 & 34.39 & 32.12 & 35.25 & 34.25 & 35.52 \\
L006 & 32.38 & 35.98 & 33.39 & 32.12 & 36.36 & 33.96 & 36.19 \\
L012 & 31.62 & 34.34 & 34.24 & 31.69 & 35.31 & 33.35 & 35.04 \\
L014 & 32.40 & 36.43 & 33.85 & 31.99 & 36.50 & 33.97 & 36.45 \\
L019 & 31.02 & 34.69 & 33.83 & 30.92 & 35.26 & 33.01 & 35.35 \\
L024 & 32.02 & 34.72 & 28.32 & 32.37 & 34.09 & 33.92 & 33.97 \\
L027 & 31.17 & 33.33 & 30.56 & 31.63 & 33.78 & 32.87 & 33.77 \\
L030 & 30.94 & 32.82 & 31.56 & 31.03 & 33.26 & 32.51 & 33.24 \\
L033 & 30.65 & 32.83 & 32.32 & 30.58 & 33.17 & 32.10 & 32.90 \\
L035 & 32.39 & 34.73 & 34.80 & 32.37 & 35.61 & 34.17 & 35.27 \\ \hline
Mean & 31.65 & 34.52 & 32.72 & 31.68 & 34.86 & 33.41 & 34.77 \\ \hline
\end{tabular}
}
\end{table}

\begin{table}[htbp]
\floatconts
  {tab:mayo_n_1000}%
  {\caption{PSNR values for LoDoPaB-Mayo (Full 1000), Head/N samples, 200000 Iterations.}}%
{\addtolength{\tabcolsep}{-0.22pt}%
\begin{tabular}{rr|r||l||r|r||r|r}
               & \multicolumn{2}{c}{DIP}           & \multicolumn{1}{c}{Pretraining} & \multicolumn{2}{c}{EDIP}          & \multicolumn{2}{c}{SVD-DIP}       \\ \hline
Patient & Final & Max   & Init        & Final & Max   & Final   & Max     \\ \hline
N001 & 35.76 & 42.38 & 28.69 & 35.99 & 39.26 & 38.57 & 38.81 \\
N003 & 33.54 & 37.54 & 17.33 & 34.69 & 36.55 & 36.67 & 36.68 \\
N005 & 35.71 & 40.31 & 32.53 & 35.71 & 37.88 & 37.77 & 38.05 \\
N012 & 35.06 & 38.92 & 30.50 & 34.86 & 37.40 & 37.00 & 37.38 \\
N017 & 34.15 & 38.65 & 16.71 & 34.73 & 36.66 & 36.96 & 36.96 \\
N021 & 34.68 & 40.93 & 19.35 & 35.78 & 38.07 & 38.04 & 38.04 \\
N024 & 33.54 & 39.91 & 24.69 & 34.47 & 36.75 & 36.74 & 36.98 \\
N025 & 34.65 & 41.07 & 26.50 & 35.53 & 37.77 & 37.89 & 38.04 \\
N029 & 35.02 & 41.29 & 23.43 & 35.73 & 38.25 & 37.88 & 38.16 \\
N030 & 35.71 & 40.97 & 28.94 & 36.05 & 38.11 & 38.11 & 38.20 \\ \hline
Mean & 34.78 & 40.20 & 24.87 & 35.35 & 37.67 & 37.56 & 37.73 \\ \hline
\end{tabular}
}
\end{table}

\begin{table}[htbp]
\floatconts
  {tab:mayo_n_200}%
  {\caption{PSNR values for LoDoPaB-Mayo (Sparse 200), Head/N samples, 100000 Iterations.}}%
{\addtolength{\tabcolsep}{-0.22pt}%
\begin{tabular}{rr|r||l||r|r||r|r}
               & \multicolumn{2}{c}{DIP}           & \multicolumn{1}{c}{Pretraining} & \multicolumn{2}{c}{EDIP}          & \multicolumn{2}{c}{SVD-DIP}       \\ \hline
Patient & Final & Max   & Init        & Final & Max   & Final   & Max     \\ \hline
N001 & 32.72 & 38.75 & 28.06 & 32.72 & 36.09 & 35.61 & 36.31 \\
N003 & 30.79 & 33.82 & 19.90 & 31.34 & 33.04 & 33.49 & 33.56 \\
N005 & 32.55 & 36.99 & 29.94 & 32.43 & 35.21 & 34.48 & 35.13 \\
N012 & 31.64 & 35.00 & 27.91 & 31.84 & 34.42 & 33.72 & 34.43 \\
N017 & 31.25 & 35.55 & 21.17 & 32.21 & 34.30 & 34.30 & 34.36 \\
N021 & 31.96 & 37.58 & 22.86 & 32.77 & 35.39 & 35.77 & 35.84 \\
N024 & 30.80 & 36.08 & 22.82 & 31.18 & 34.23 & 33.90 & 34.31 \\
N025 & 31.55 & 36.78 & 25.26 & 32.62 & 35.34 & 34.67 & 35.16 \\
N029 & 32.09 & 37.69 & 25.37 & 32.73 & 35.51 & 34.97 & 35.62 \\
N030 & 32.12 & 36.78 & 27.91 & 32.77 & 35.34 & 35.18 & 35.51 \\ \hline
Mean & 31.75 & 36.50 & 25.12 & 32.26 & 34.89 & 34.61 & 35.02 \\ \hline
\end{tabular}
}
\end{table}

    \begin{figure}[htbp]
    \floatconts
      {fig:psnrtrace_lodo_mayo_200}
      {\caption{The mean and SD over 3 runs of the optimization of DIP, EDIP and SVD-DIP on LoDoPaB-Mayo (Sparse 200) for samples C009, L012 and N005. %
      Each line represents the mean over 3 runs.
      }}
      {\includegraphics[width=1\linewidth]{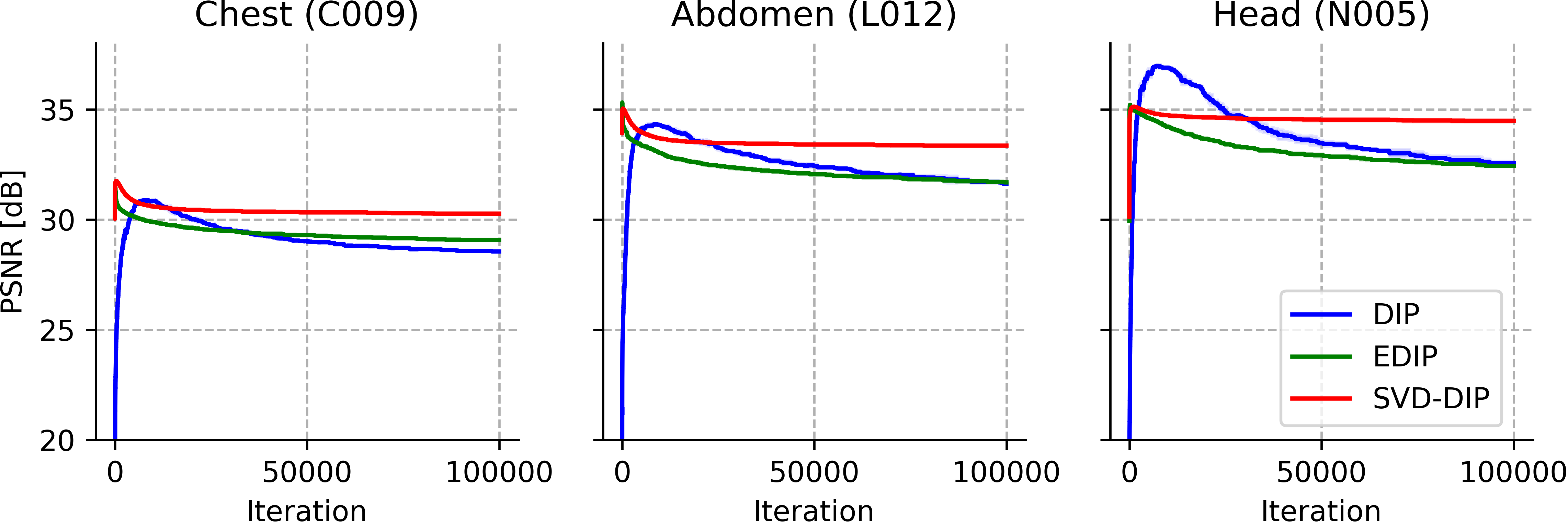}}
    \end{figure}
\FloatBarrier
\section{Using TV for Early Stopping}
Simple strategies for early stopping (ES) are hard to come by for DIP or its variants; there is extensive literature on the subject. %
Most ES approaches are tailored to either a specific problem or model, and often do not generalize to different settings. For example, the interesting approach proposed by \cite{spectral_bias} requires a specialized architecture, while \cite{Jo_2021_ICCV} suggest a method for ES in the context of denoising based on Stein unbiased risk estimator (SURE). 
First we briefly investigated whether the prior (or regulariser) could act as a proxy for an ES criterion, and discuss SURE later. Specifically we investigated whether it is possible to use the TV for ES. From \autoref{fig:TVLoss_lodo}, the TV fails to yield no useful information about the PSNR curve. The TV trends upwards, with no discernible difference in this trend when the PSNR falls, making it a poor tool for predicting when overfitting occurs.

\begin{figure}[htbp]
    \floatconts        
      {fig:TVLoss_lodo}
      {\caption{PSNR and mean loss output TV of DIP on LoDoPaB.}}
      {\includegraphics[width=0.9\linewidth]{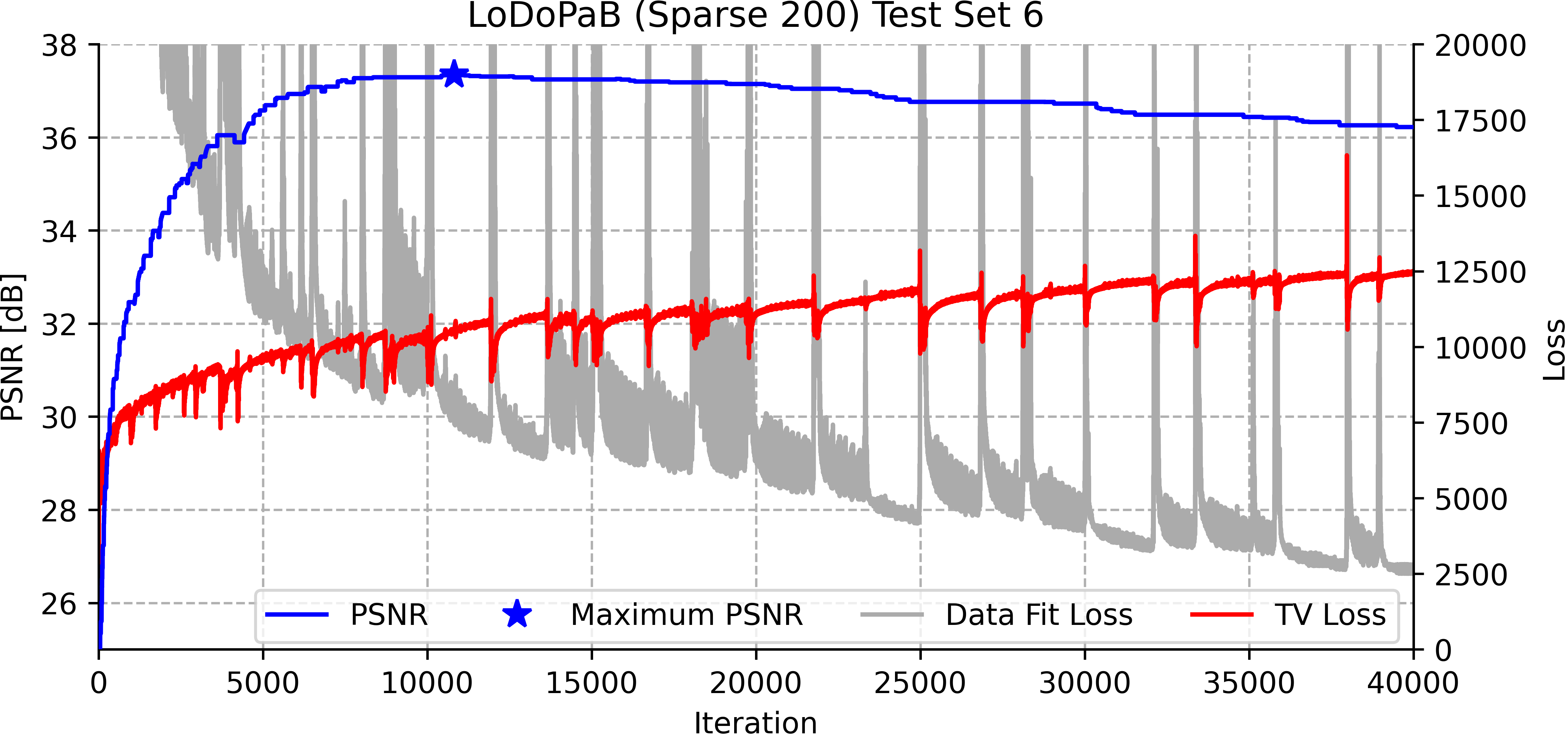}}
      \vspace{-1em}
    \end{figure}

\FloatBarrier

\section{Is SURE a viable Method for CT?}\label{SURE}
One naturally may ask whether a DIP which utilizes Stein's unbiased risk estimator (SURE) could serve as a better baseline than the DIP we have used. Indeed, SURE has reported state-of-the-art results on several image-restoration tasks such as de-noising, de-blurring, and super-resolution.
However, we argue that its applicability to more ``more ill-posed" inverse problems in imaging is currently an open field of research. Recall that the SURE is an unbiased estimate of the mean square error (MSE). Clearly, the MSE has an explicit dependence on the unknown ground truth image. In several settings, there have been very successful ways to estimate the MSE. Initially proposed by Stein for the independent, identically distributed (i.i.d.) Gaussian model, SURE has been extended to exponential distributions (GSURE) by  \cite{Eldar_2009}. 
Lately, GSURE has been successfully applied in many works  \cite{unsuv_SURE,img_rest_GSURE,Jo_2021_ICCV}, which have demonstrated excellent performance of the method on de-noising, de-blurring, super-resolution or, compressed sensing tasks.

In the presence of an operator, the use of SURE is less direct than for de-noising. For example, in a Gaussian linear model (e.g., $y = Ax + \nu $), the projected GSURE provides an unbiased estimate of the projected MSE, which is the expected error of the projections in the column space of the operator $A$. Hence, when the matrix $A$ is rank deficient, SURE is a sub-optimal estimator as it does not estimate the orthogonal complement. Indeed, for very ill-posed inverse problems, e.g., in the highly under-sampled setting (i.e., rank-deficient setting), \cite{unsuv_SURE}  show that the GSURE-based projected MSE is a poor approximation of the actual MSE. To address this issue, \cite{Ensure} propose ENSURE, which generalizes the classical SURE and GSURE formulation where the images are sampled by different operators $A$, chosen randomly from a set. Unfortunately, ENSURE is not applicable to CT problems in conjunction to the unsupervised DIP framework.

\begin{figure}[htbp]
    \floatconts        
      {fig:SURE}
      {\caption{DIP with LS loss compared to DIP with SURE loss for different tasks on the Lotus root. For denoising (first column), we tested the network both without and with skip connections. For deblurring (second column), a well-conditioned ($\kappa(A) \sim 14$) and a badly-conditioned ($\kappa(A) \sim 1335$) problem were tested, both using skip connections in the network. For CT reconstruction (right), we use the sparse 20-angle setting like in the main experiments {(but without TV regularization for both methods)}.}}
      {\includegraphics[width=\linewidth]{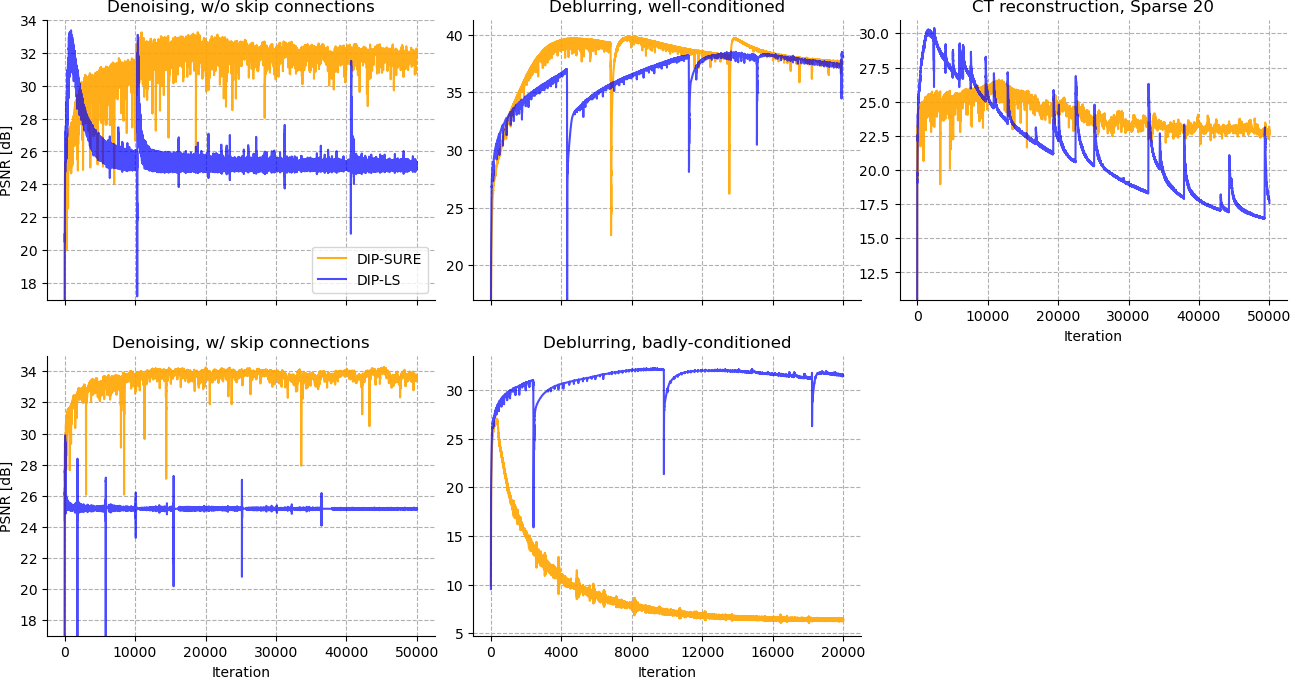}}
      \vspace{-1em}
    \end{figure}

To confirm these observations, we run the following set of experiments. 
\autoref{fig:SURE} shows that DIP-SURE significantly outperforms DIP when optimized with the least-square objective (DIP-LS) for the de-noising task, where we use the lotus ground truth image $x$ and simulate the noisy image $y$ by adding white noise $\nu \sim \mathcal{N}(0, \sigma^2I)$ to $x$, with $\sigma$ being half of the mean of $|x|$. The difference in the performance between DIP-SURE and DIP-LS is further accentuated if the architecture uses skips connections at every depth. Note that in the above works on SURE, all use a U-Net architecture equipped with skip connections at every depth since skip connections boost the network capacity to overfit to noise.

We then compare DIP-SURE vs. DIP-LS on Gaussian deblurring tasks. We construct two matrices $A$ using the same Gaussian kernel but applying different Tikhonov regularization, thus adding either to the diagonal entries the mean of all diagonal entries, or $0.01$ percent of the diagonal mean. This results in two  blurring matrices with different condition numbers (i.e., $\kappa(A) \sim 14$ and $\kappa(A) \sim 1335$ respectively), i.e. different degree of ill-posedness. DIP-SURE outperforms DIP-LS in the mildly ill-conditioned case, whilst, as we would expect, fails on the more ill-posed task. Similarly, on the Lotus reconstruction tasks, due to ill-posedness of the reconstructive task, DIP-SURE fails to match DIP-LS. 
In CT reconstruction, the operator usually has very small singular values and for sparse-view scans it is also rank-deficient, rendering the problem ill-posed \cite{Buzug2011}.
In sum, our investigation suggests that SURE is not a viable option for our setting.
\end{document}